\newtheorem{theorem}{Theorem}
\newtheorem{lemma}{Lemma}
\newcommand{\thm}[1]{\hyperref[thm:#1]{Theorem~\ref*{thm:#1}}}
\newcommand{\cor}[1]{\hyperref[cor:#1]{Corollary~\ref*{cor:#1}}}
\newcommand{\defn}[1]{\hyperref[defn:#1]{Definition~\ref*{defn:#1}}}
\newcommand{\lem}[1]{\hyperref[lem:#1]{Lemma~\ref*{lem:#1}}}
\newcommand{\prop}[1]{\hyperref[prop:#1]{Proposition~\ref*{prop:#1}}}
\newcommand{\fig}[1]{\hyperref[fig:#1]{Figure~\ref*{fig:#1}}}
\newcommand{\tab}[1]{\hyperref[tab:#1]{Table~\ref*{tab:#1}}}
\newcommand{\algo}[1]{\hyperref[algo:#1]{Algorithm~\ref*{algo:#1}}}
\renewcommand{\sec}[1]{\hyperref[sec:#1]{Section~\ref*{sec:#1}}}
\newcommand{\append}[1]{\hyperref[append:#1]{Appendix~\ref*{append:#1}}}
\newcommand{\fac}[1]{\hyperref[fac:#1]{Fact~\ref*{fac:#1}}}
\newcommand{\lin}[1]{\hyperref[lin:#1]{Line~\ref*{lin:#1}}}
\def\>{\rangle}
\def\<{\langle}
\def\map#1{\mathcal #1}
\def\set#1{\mathsf{#1}}
\newcommand{\R}{\mathbb{R}}
\def\Tr{\operatorname{Tr}}\def\:{\hbox{\bf:}}
\def\?#1{\if.#1{}\else#1\fi}
\begin{document}

\title{Quantum autoencoders for communication-efficient  quantum cloud computing}

\author{Yan Zhu}
\affiliation{QICI Quantum Information and Computation Initiative, Department of Computer Science,
The University of Hong Kong, Pokfulam Road, Hong Kong}
\affiliation{HKU-Oxford Joint Laboratory for Quantum Information and Computation}
\author{Ge Bai}
\affiliation{QICI Quantum Information and Computation Initiative, Department of Computer Science,
The University of Hong Kong, Pokfulam Road, Hong Kong}
\affiliation{HKU-Oxford Joint Laboratory for Quantum Information and Computation}
\author{Yuexuan Wang}
\affiliation{College of Computer Science and Technology, Zhejiang University, Hangzhou, China}
\author{Tongyang Li}\thanks{Corresponding author, email: tongyangli@pku.edu.cn}
\affiliation{Center on Frontiers of Computing Studies, Peking University}
\affiliation{School of Computer Science, Peking University}
\affiliation{Center for Theoretical Physics, Massachusetts Institute of Technology}
\author{Giulio Chiribella}\thanks{Corresponding author, email: giulio@cs.hku.hk}
\affiliation{QICI Quantum Information and Computation Initiative, Department of Computer Science,
The University of Hong Kong, Pokfulam Road, Hong Kong}
\affiliation{Department of Computer Science, University of Oxford, Parks Road, Oxford OX1 3QD, United Kingdom}
\affiliation{Perimeter Institute For Theoretical Physics, 31 Caroline Street North, Waterloo N2L 2Y5, Ontario, Canada}
\affiliation{The University of Hong Kong Shenzhen Institute of Research and Innovation, Yuexing 2nd Rd Nanshan, Shenzhen 518057, China}

\maketitle

\begin{abstract}
In the model of quantum cloud computing, the server executes a computation on the quantum data provided by the client.
In this scenario, it is important to reduce the amount of quantum communication between the client and the server. A possible approach is to transform the desired computation into a compressed version that acts on a smaller number of qubits, thereby reducing the amount of data exchanged between the client and the server.  Here we propose quantum autoencoders for quantum gates (QAEGate) as a method for compressing quantum computations. We illustrate it in concrete scenarios of single-round and multi-round communication and validate it through numerical experiments.
A bonus of our method is it does not reveal any information about the server's computation other than the information present in the output.

\end{abstract}

\section{Introduction}

Over the past decade, quantum computing technology underwent a rapid series of advances.  Both the size and power of quantum computers have been steadily increasing over the years,  recently entering a new regime of  "quantum supremacy"~\cite{Preskill2018NISQ}, in which the output of the quantum computations can be barely reproduced by the world's best supercomputers.   Milestone achievements are the "quantum supremacy" demonstrations  by Google~\cite{arute2019supremacy} and USTC~\cite{zhong2020quantum}, using superconducting qubits and photonic qubits, respectively.

A promising direction in quantum computing is the study of cloud computing scenarios, where a client requests a remote server to perform some desired quantum operations \cite{broadbent2009universal, barz2012demonstration, morimae2013blind}. However, limits on the amount of quantum communication between client and server constraint the size of the computations that can be effectively implemented in quantum cloud computing. 

In this paper, we address the problem of reducing the amount of communication needed by the server to perform a quantum operation on quantum data provided by the client. The operation belongs to a parametric family of quantum gates known to both parties, but the specifications of the gate are known only to the server.    Our objective here is to maximize the accuracy of the executed quantum operation when the capacity of the communication link and the total number of qubits exchanged between the server and the client are limited. 

Our main contribution is a method for compressing a parametric family of quantum gates, turning it into another gate family acting on a smaller number of qubits. Our method is based on autoencoders, a type of neural networks that have been very successful in classical machine learning \cite{schmidhuber2015deep}, and have been recently used for the compression of quantum states   \cite{romero2017quantum}. 
We introduce a  quantum gate autoencoder (QAEGate), providing the first quantum machine learning model that takes gates, rather than states, as inputs.

Our method also addresses the problem of minimizing the amount of information revealed to the client. In many situations,  the server would not like to disclose any more  information about the operation other than the unavoidable information revealed  by the application of the operation on the client's input. Our method achieves this feature by constructing a blind compression protocol, independent of the specifications of the operation.  

Technically, the training of our QAEGate model is based on stochastic gradient descent \cite{bottou2012stochastic}. We prove that the training is guaranteed to converge in polynomial time in the size of the initial operation. We then conduct numerical experiments that show the effectiveness of QAEGate in various settings including single-round and multi-round communication between the client and the server.

The remainder of the paper is organized as follows. In \sec{relat}, we introduce some related works about quantum cloud computing and quantum autoencoders. Some preliminaries on autoencoders and quantum supermaps are provided in \sec{prelim}. \sec{task} is devoted to introduce the quantum cloud computing task discussed in this paper. The structure and training details of our proposed QAEGate model are covered in \sec{QAEGate}. Then, we explain how to apply our method to concrete scenarios of quantum cloud computing in \sec{app} and conduct some numerical experiments in \sec{exper}. We conclude this paper in \sec{conclu}, with discussions on future directions for our method.

\section{Related work}\label{sec:relat}
Classical cloud computing  involves the execution of a computation on a remote server \cite{armbrust2010view}. In the quantum version of this scenario \cite{barz2012demonstration}, a client  provides an input state  and asks  a remote server  to perform a sequence of quantum gates on it.  Two important issues are (1) to cope with the limited amount of quantum communication available in realistic scenarios, and (2)  to minimize the leakage of the information about the server's computation.  These two issues have been previously approached with techniques from quantum Shannon theory  \cite{yang2020communication} and quantum cryptography  \cite{sheng2017distributed}. Related  issues have also been considered in blind delegated quantum computation \cite{broadbent2009universal}, where the goal is to guarantee  data confidentiality on  the client side rather than the server side.

Quantum autoencoders for quantum states were proposed by \cite{romero2017quantum} and has been applied to quantum state compression and denoising quantum data \cite{bondarenko2020quantum,achache2020denoising}. The quantum model used in these works has a  similar structure to classical autoencoders, taking as  input  a quantum state represented by a fixed-length vector. This model, however, cannot be extended from quantum states to quantum gates, which in general are provided as black boxes. To convert a gate into a quantum state, one would have to apply it on a fixed input state. However, such conversion is not reversible due to the quantum No-programming theorem \cite{nielsen1997programmable, yang2020optimal}.

\section{Preliminaries}\label{sec:prelim}

\paragraph{Autoencoders.} An autoencoder is composed of an encoder and a decoder. One of its main application is dimensionality reduction, where the encoder maps a high-dimensional vector input $x$  to a low-dimensional representation $h$, and the decoder maps $h$ back to a reconstructed high-dimensional vector $x'$. In the training process, the autoencoder is optimized so that the high-dimensional vector $x'$  is as close as possible to the original high-dimensional vector $x$. If $x'$ and $x$ are close enough, the overall protocol provides a faithful
compression of the original input $x$ into the low dimensional vector $h$. The structure of a typical autoencoder is depicted in \fig{autoencoder}.

\begin{figure}[ht]
\centering
\centerline{\includegraphics[width=80mm]{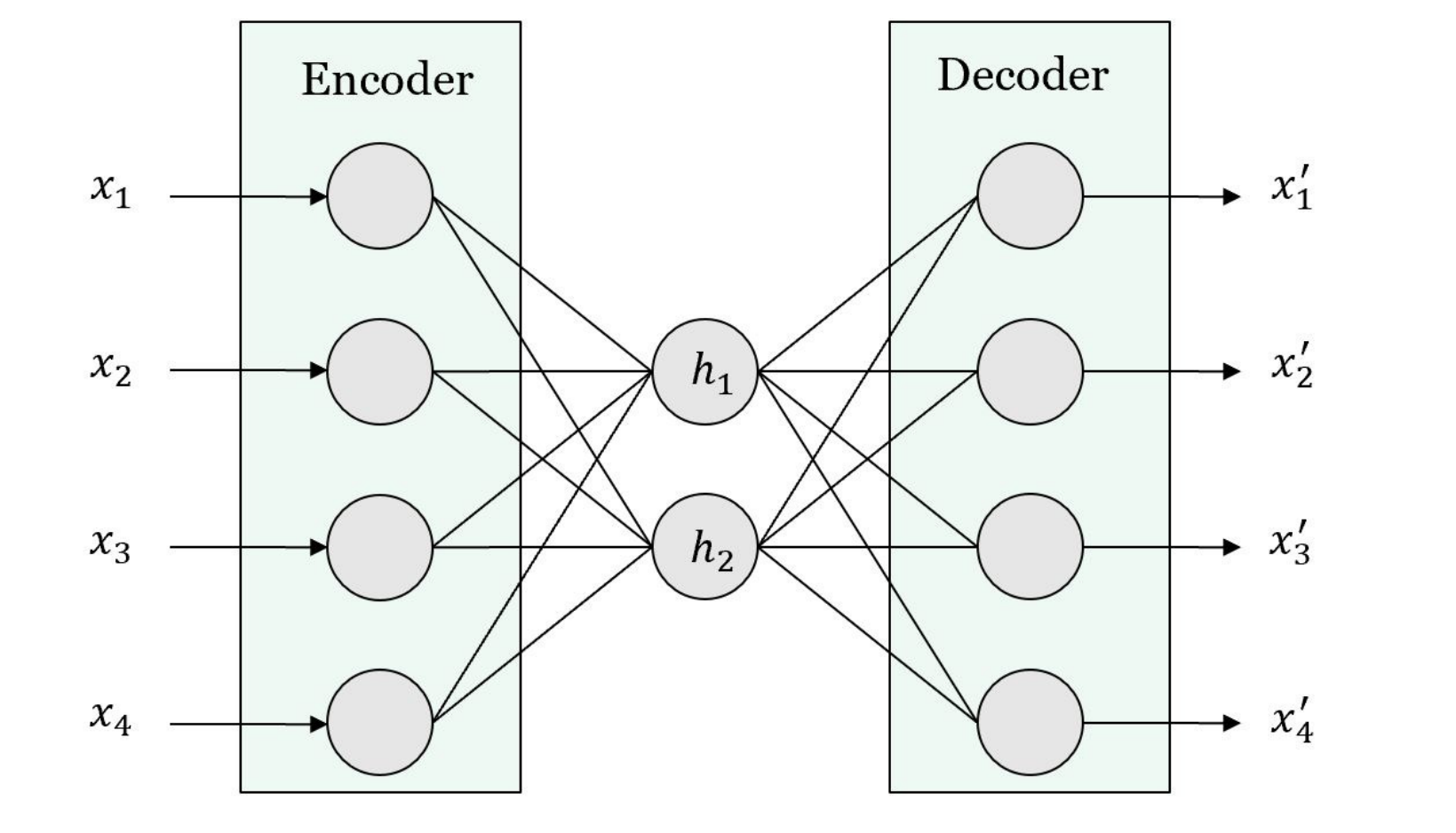}}
\caption{Structure of a typical autoencoder. In this specific example, the  autoencoder compresses a 4-dimensional vector into a 2-dimensional vector.}
\label{fig:autoencoder}
\end{figure}

\paragraph{Quantum supermaps.}

A general quantum operation is represented by a quantum channel $\mathcal{C}$, which is a map from density operators to density operators \cite{nielsen2002quantum}.
A quantum supermap \cite{chiribella2008transforming,chiribella2009theoretical} $\tilde{\mathcal{S}}$ maps a quantum channel $\mathcal{C}$ into a quantum channel $\mathcal{C}'$ as $\mathcal{C}' = \tilde{\mathcal{S}}(\mathcal{C})$. Every quantum supermap can be represented by a quantum circuit in \fig{supermap} \cite{chiribella2008transforming}, where the input
quantum operation $\mathcal{C}$ sends states in $\mathcal H_{in}$ to states in $\mathcal H_{out}$ and the output quantum operation $\mathcal{C}'$ sends states on $\mathcal K_{in}$ to states in $\mathcal K_{out}$. The supermap is realized by two maps $\mathcal{V}$ and $\mathcal{W}$ located at the input and at the output ports of the quantum operation $\mathcal{C}$ respectively. 

\begin{figure}[ht]
\centering
\centerline{\includegraphics[width=70mm]{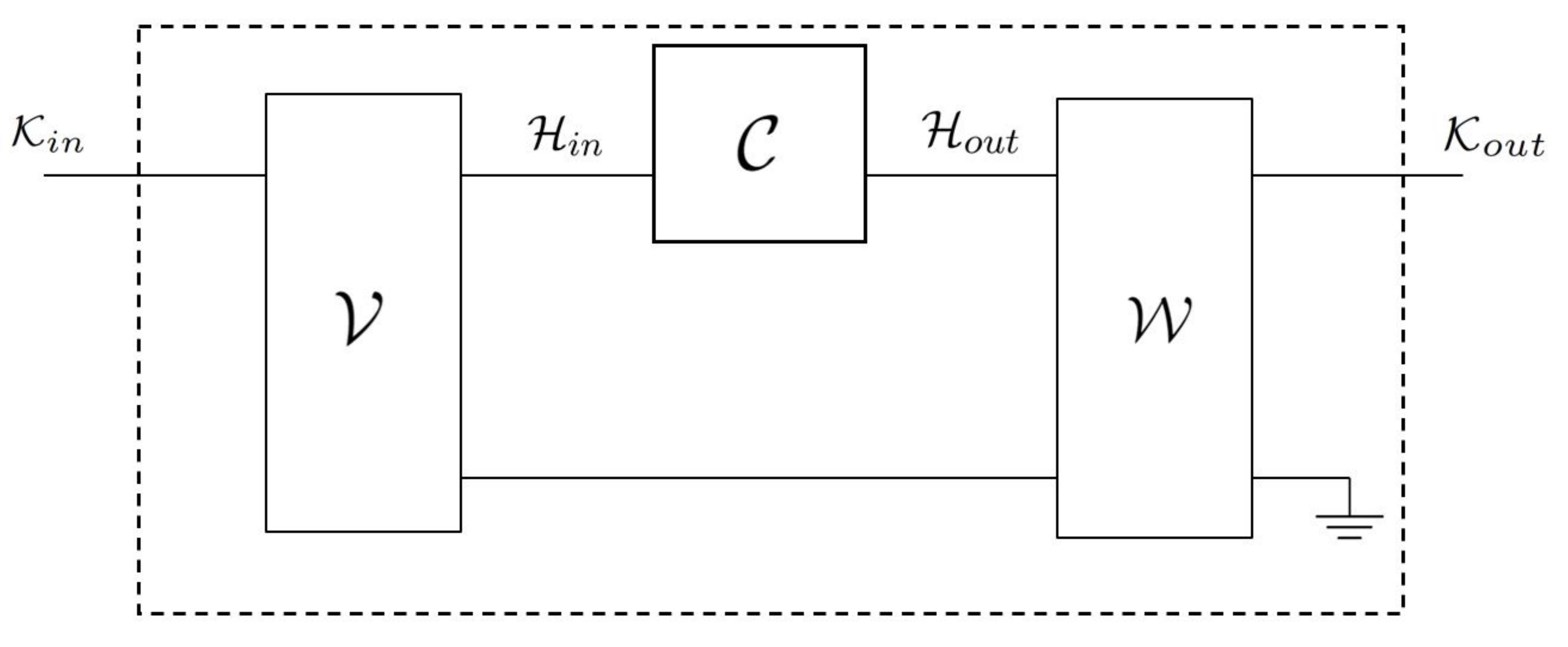}}
\caption{Quantum supermap.}
\label{fig:supermap}
\end{figure}

\section{The Quantum  Cloud Computing Task} \label{sec:task}
In this paper we will focus on a basic scenario of quantum cloud computing,  depicted in \fig{task1}. A parametric family of $n$-qubit quantum gates $\{U_x\}_{x\in \set X}$ with parameter $x$ represents a set of possible quantum computations. A server is able to implement a gate $U_x$,  unknown to the client. The client chooses an $n$-qubit input state $\rho$ and asks  the server  to  apply the gate $U_x$ on it, thus obtaining the state $\map{U}_{x}(\rho):=U_x \rho U_x^\dag$.  At the same time, the server wants to keep $x$ confidential, avoiding unnecessary information leaked to the client during the communication.
 
A trivial way to achieve the above task is to have the sender send the quantum state $\rho$ to the server, who performs the requirexx gate, anxx  sends back the output state $\map{U}_{x}(\rho)$. However, the quantum communication is an expensive resource, and it is often limited in realistic applications, making it difficult to  transmit a full $n$-qubit quantum state back and forth between the client and the server. Here we consider the scenario where the capacity of the quantum communication link is bounded by  $a\le n$ qubits.

In order to cope with  the bottleneck on the amount of quantum communication, we design a pair of quantum circuits, including an encoder $\mathcal{E}$ implemented by the server and a decoder $\mathcal D$ implemented by the client. The encoder $\mathcal E$ serves as a quantum supermap transforming an $n$-qubit quantum channel to an $a$-qubit quantum channel, while the decoder $\mathcal D$ does the opposite, recovering the $n$-qubit quantum channel from the output of the encoder. Since the decoder is implemented by the client, who has no knowledge of the parameter $x$,   the supermap  $\mathcal D$ must be independent of $x$.    On the other hand, the encoder is implemented by the server, and could in principle depend on $x$. However, a dependence on $x$ may result into a leakage of  information to the client.  For this reason, we also require the encoder $\mathcal E$ to be independent of $x$. We will later show that  this choice reaches the best possible confidentiality in the ideal situation where  $U_x$ is accurately implemented.

\begin{figure}[ht]
\centering
\centerline{\includegraphics[width=80mm]{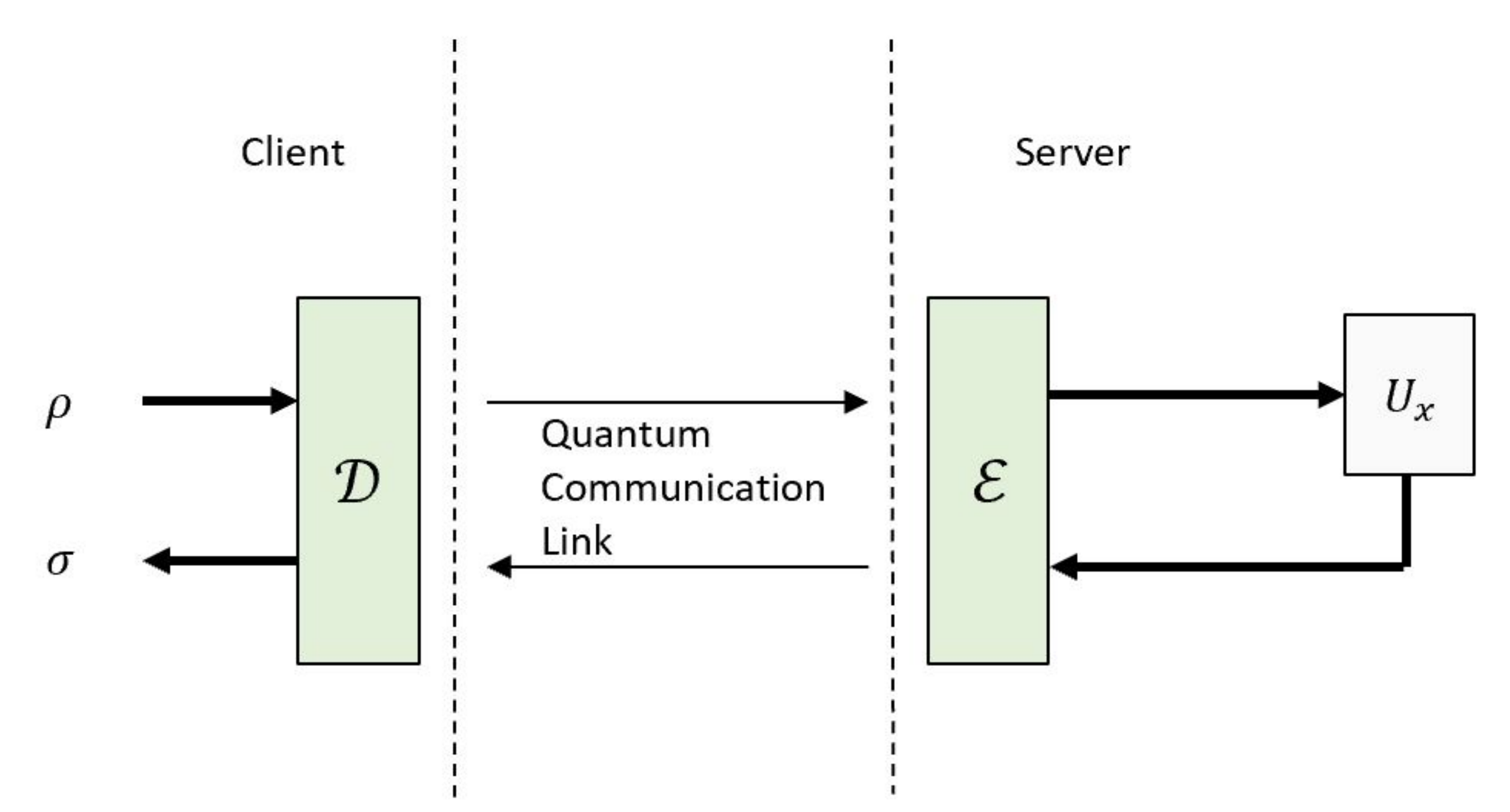}}
\caption{The basic scenario of quantum cloud computing. The client expects to apply a quantum gate $U_x$ on its own state $\rho$ by communicating with a server over a communication link of limited capacity.} 
\label{fig:task1}
\end{figure}

\section{Quantum Autoencoders for Quantum Gates}\label{sec:QAEGate}

In this section we  consider a basic scenario of quantum cloud computing with one round  of communication between client and server.  In this scenario, we develop a model of quantum autoencoders for quantum gates (QAEGate). In the following, we will introduce the structure of the proposed model and demonstrate its implementation and training through numerical experiments.  Furthermore, we prove that stochastic gradient descent has convergence guarantee in the training of our QAEGate model. We will extend our model to more scenarios in \sec{app}.

\subsection{Structure}
Our model consists of an encoder and a decoder. The encoder produces an encoded quantum gate by inserting the gate $U_x$ into a suitable quantum circuit, initializing $n-a$ input qubits to a fixed state $|0\>$, and discarding $n-a$ output qubits. The result is a (generally noisy) quantum channel acting on $a$ qubits. In turn, the decoder converts the $a$-qubit channel back into an $n$-qubit channel, which aims to approximate the initial gate. The structure of the encoder and the decoder are shown in \fig{implementation}.

To construct the encoding and decoding circuits we use parameterized unitary operators, which have been successfully employed to build variational quantum circuits \cite{cong2019quantum, farhi2014quantum}.  
Our construction  is depicted in \fig{implementation}.
   The encoder consists of  two parameterized unitary operators, denoted by $U^{\boldsymbol{\theta_{\rm le}}}$ and $U^{\boldsymbol{\theta_{\rm re}}}$,  depending on  parameters $\boldsymbol{\theta_{\rm le}} = (\theta_{le_1},\theta_{le_2},...,\theta_{le_m})$  and $\boldsymbol{\theta_{\rm re}} = (\theta_{re_1},\theta_{re_2},...,\theta_{re_m})$. The gates $U^{\boldsymbol{\theta_{\rm le}}}$ and $U^{\boldsymbol{\theta_{\rm re}}}$  are placed on the left  and right of   the original quantum gate, respectively. The decoder consists of  two unitary operators $U^{\boldsymbol{\theta_{\rm ld}}}$ and $U^{\boldsymbol{\theta_{\rm rd}}}$, depending on parameters  $\boldsymbol{\theta_{\rm ld}} = (\theta_{ld_1},\theta_{ld_2},...,\theta_{ld_m})$ and $\boldsymbol{\theta_{\rm rd}} = (\theta_{rd_1},\theta_{rd_2},...,\theta_{rd_m})$.  The gates $U^{\boldsymbol{\theta_{\rm ld}}}$ and $U^{\boldsymbol{\theta_{\rm rd}}}$ are placed on the left and right of the encoded gate, combined with the identity gate on the remaining  $n-a$ qubits. Note that these parameterized unitary operators are independent of the parameter $x$ of the original quantum gate here. 

\begin{figure}[ht]
\centering
\centerline{\includegraphics[width=\columnwidth]{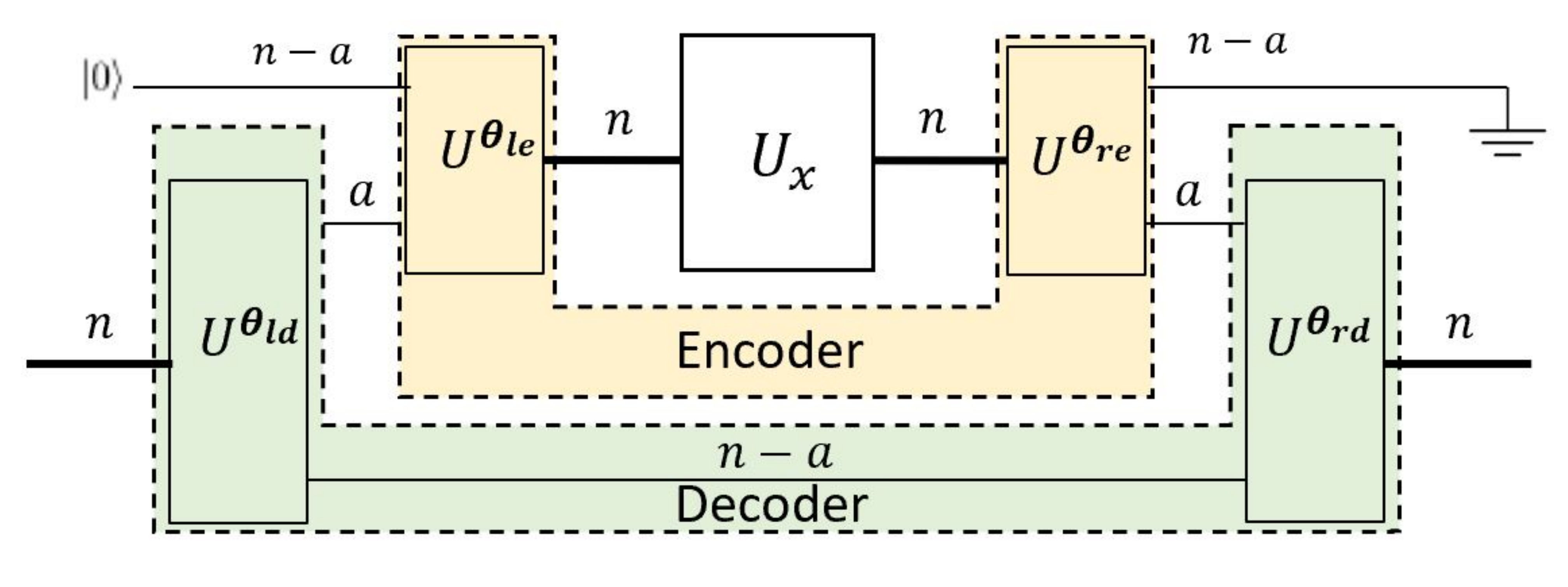}}
\caption{Implementation details of QAEGate. The parameterized unitary operators in yellow constitute the encoder while the parameterized unitary operators in green constitute the decoder. In the encoding phase, we insert the gate $U_x$ in the middle of the encoder, initialize $n-a$ input qubits to a fixed state $|0\>$, and discard $n-a$ output qubits to obtain a quantum channel acting on $a$ qubits. In the decoding phase, the decoder maps the channel back to an approximation of the initial $n$-qubit quantum gate.}
\label{fig:implementation}
\end{figure}

The choice of the parameters  $\boldsymbol{\theta_{\rm le}}$,  $\boldsymbol{\theta_{\rm re}}$,  $\boldsymbol{\theta_{\rm ld}}$, and $\boldsymbol{\theta_{\rm rd}}$ is optimized in order to maximize the similarity between the decoded quantum gate and the original quantum gate. As a similarity measure, we use the overlap between the Choi operators \cite{choi1975completely,jamiolkowski1972linear}, due to the relative ease of evaluating this quantity in numerical experiments.    As the optimization procedure, we will use  a stochastic gradient descent method, described in Subsection \ref{subsec:training}.


To run the optimization, one needs first to fix the parametrization of the gates.  The naive  choice would be to pick a parametrization  that can represent arbitrary quantum gates. However, this approach has obvious  difficulties: 
\begin{itemize}[leftmargin=*]
\item  Describing an arbitrary $n$-qubit unitary transformation requires an exponential number of  parameters, and therefore a full optimization of the parameters in the autoencoder is only feasible for small values of $n$. 
\item Due to hardware restrictions of the current quantum computers,  it is difficult to implement arbitrary $n$-qubit unitary operators. This is especially true for the client, who may only have the ability to implement a small number of basic quantum gates. 
\end{itemize}
To address these problems,  we  propose an  implementation of QAEGate based on an approximation of  generic $n$-qubit unitary transformations that uses a  polynomial number of  parameters. We implement the parameterized unitary operators for $n$ qubits by decomposing them into a sequence of parameterized two-qubit unitary operators, where each pair of qubits corresponds to one such operator. \fig{unitary} presents the case when $n=4$. For each two-qubit parameterized unitary operator, it is further decomposed into a sequence of basic quantum gates, as depicted in \fig{2qubitunitary}.

\begin{figure}[ht]
\centering
\centerline{\includegraphics[width=\columnwidth]{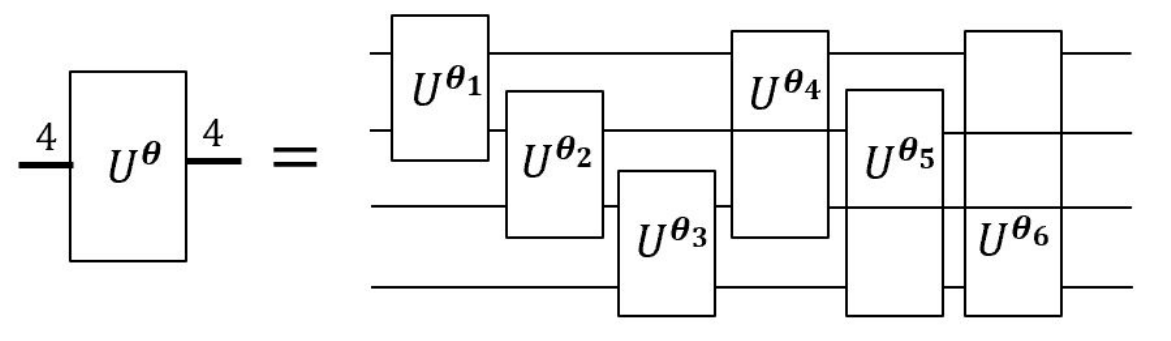}}
\caption{Decomposition of a parameterized 4-qubit unitary gate into 6 parameterized two-qubit gates.}
\label{fig:unitary}
\end{figure}

\begin{figure}[ht]
\centering
\centerline{\includegraphics[width=\columnwidth]{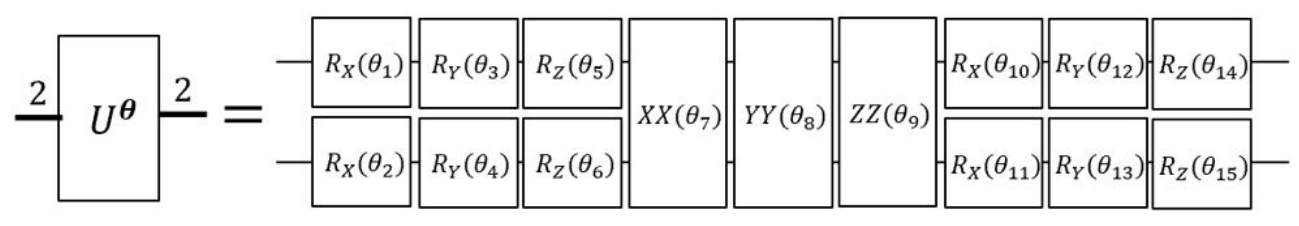}}
\caption{Decomposition of a two-qubit parameterized unitary gate.}
\label{fig:2qubitunitary}
\end{figure}

Here $R_X(\theta)$ is the quantum gate defined as follows. 
\begin{align*}
R_X(\theta)
:=\exp\left(i\theta \sigma_x
\right)
=     \left(
  \begin{array}{cc}
    \cos\theta & -i\sin\theta\\
     -i\sin\theta & \cos\theta\\
  \end{array}
\right)
\label{eqn:rx_gate}
\end{align*}

The other two gates $R_Y(\theta)$ and $R_Z(\theta)$ are similar to $R_X(\theta)$ but for Pauli matrices $\sigma_y$ and $\sigma_z$.

The two-qubit gate $XX(\theta)$ is the Ising coupling gate that commonly appears in real-world quantum computers with trapped-ion architecture \cite{debnath2016demonstration}. It is defined as $XX(\theta):=\exp(i\theta \sigma_x\otimes \sigma_x)$ for the Pauli matrix $\sigma_x$, and written in matrix form as:
\begin{align*}
XX(\theta)
=     \left[
  \begin{array}{cccc}
    \cos\theta &0&0& -i\sin\theta\\
     0&\cos\theta&-i\sin\theta &0\\
     0&-i\sin\theta&\cos\theta &0\\
     -i\sin\theta&0&0& \cos\theta
  \end{array}
\right].
\end{align*}
$YY(\theta)$ and $ZZ(\theta)$ are Ising coupling gates corresponding to Pauli matrices $\sigma_y$ and $\sigma_z$ respectively.
\subsection{Training}\label{subsec:training}
To find the optimal parameters of QAEGate, we adopt classical gradient-based methods such as stochastic gradient descent \cite{kiefer1952stochastic}. 

We denote the parameters of QAEGate to be trained as $\boldsymbol\theta$. The training set is composed of $k$ quantum gates sampled randomly from $\{U_x\}$. In every iteration, we select a quantum gate $U$ from these $k$ gates randomly and insert it into the QAEGate. Then we calculate the overlap $f$ between the Choi state $C$ of $U$ and the Choi state $C'$ of the decoded state from the QAEGate. We define the loss function as $\mathcal{L} = 1 -f$ here and update the parameters of QAEGate according to the gradients $\nabla_{\boldsymbol{\theta}} \mathcal{L}$ until the decoded quantum gate is close enough to the original quantum gate or the maximum number of iterations is reached. We describe the complete training process in \algo{training_QAE}.

\begin{algorithm}
\DontPrintSemicolon
\KwData{$k$ quantum gates sampled randomly from $\{U_x\}$, maximum number of iterations $K$, learning rate $\eta$, threshold $\delta$.}
Initialize QAEGate parameters $\boldsymbol{\theta}$ randomly, $i = 0$, $f = 1$\;
\While{$i<K$ or $f>\delta$ }{Select the quantum gate $U$ randomly from $k$ input gates and insert it into the QAEGate \;
Calculate the Choi state $C$ 
of $U$\;
Get the decoded state from the QAEGate and calculate its corresponding Choi state $C'$\;
$f = {\rm Tr}(C C')$, which is the overlap between $C$ and $C'$ \;
$\mathcal{L} = 1 - f$ \;
Calculate $\nabla_{\boldsymbol{\theta}} \mathcal{L}$ and update $\boldsymbol{\theta}$ as $\boldsymbol{\theta} = \boldsymbol{\theta}-\eta \nabla_{\boldsymbol{\theta}} \mathcal{L}$ \;
$i = i + 1$\;}
\caption{Quantum autoencoders for quantum gates (QAEGate).\label{algo:training_QAE}}
\end{algorithm}

\subsection{Convergence Analysis}
We further prove that our proposed QAEGate can be efficiently trained by the stochastic gradient-based method in \algo{training_QAE}. Specifically, we obtain the following convergence guarantee:

\begin{theorem}[Convergence guarantee]\label{thm:theo1}
If we perform SGD, as in \algo{training_QAE}, to optimize $\boldsymbol\theta$ in the training of the QAEGate model, then the convergence rate is $T = \mathcal{O}( \frac{4\dim{ \boldsymbol\theta} }{\epsilon^4})$ for achieving the condition $\mathbb{E}[\Vert \nabla_{\boldsymbol{\theta}} \mathcal{L} \Vert^2]\leq \epsilon^2$.
\end{theorem}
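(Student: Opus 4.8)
\emph{Proof proposal.} The plan is to recognize the claim as an instance of the standard convergence guarantee for stochastic gradient descent on a smooth but (generally) nonconvex objective, and then to verify the three hypotheses of that guarantee for the QAEGate loss with the right dependence on $\dim\boldsymbol\theta$. Throughout, write $F(\boldsymbol\theta):=\E_U[\mathcal{L}_U(\boldsymbol\theta)]$ for the population loss, where $U$ is drawn uniformly from the $k$ training gates and $\mathcal{L}_U(\boldsymbol\theta)=1-\Tr\!\big(C_U\, C'_U(\boldsymbol\theta)\big)$, with $C_U$ the (pure) Choi state of $U$ and $C'_U(\boldsymbol\theta)$ the Choi state of the corresponding decoded channel. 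Since $F$ is an average over the finite training set and $\mathcal{L}_U$ is differentiable in $\boldsymbol\theta$, the single-gate gradient $\nabla_{\boldsymbol\theta}\mathcal{L}_U$ used in \algo{training_QAE} is automatically an unbiased estimator of $\nabla_{\boldsymbol\theta}F$. The first easy observation is that the objective is bounded: because $C_U$ and $C'_U(\boldsymbol\theta)$ are density operators and $C_U$ is pure, $0\le\Tr(C_U C'_U)\le 1$, so $\mathcal{L}_U,F\in[0,1]$ and the initial suboptimality gap satisfies $F(\boldsymbol\theta_0)-F^\star\le 1$ with $F^\star:=\inf_{\boldsymbol\theta}F(\boldsymbol\theta)$.

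Next I would establish smoothness and a gradient bound using the exact trigonometric structure of the circuit. Every parameterized layer appearing in \fig{unitary}--\fig{2qubitunitary} has the form $e^{i\theta_j G_j}$ with a Hermitian generator satisfying $G_j^2=I$ (true for $\sigma_x,\sigma_y,\sigma_z$ and for $\sigma_x\!\otimes\!\sigma_x$, $\sigma_y\!\otimes\!\sigma_y$, $\sigma_z\!\otimes\!\sigma_z$), so for each fixed $U$ the overlap $\Tr(C_U C'_U(\boldsymbol\theta))$ is a trigonometric polynomial of frequency $2$ in each coordinate $\theta_j$. The parameter-shift rule then applies exactly: every first and second partial derivative of $\mathcal{L}_U$ is a fixed signed combination of shifted evaluations of $\mathcal{L}_U\in[0,1]$, hence bounded by an absolute constant uniformly in $\boldsymbol\theta$ and in $U$. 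In particular $|\partial_j\mathcal{L}_U|\le 1$ for every coordinate, and the entrywise-bounded Hessian (together with a bound on directional second derivatives through the layered product) makes $F$ an $L$-smooth function with a constant $L$ that depends on the fixed ansatz architecture but not on $\epsilon$.

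From the per-coordinate bound $|\partial_j\mathcal{L}_U|\le 1$ I obtain $\|\nabla_{\boldsymbol\theta}\mathcal{L}_U\|^2=\sum_j(\partial_j\mathcal{L}_U)^2\le\dim\boldsymbol\theta$, hence the stochastic-gradient variance obeys $\E_U\|\nabla_{\boldsymbol\theta}\mathcal{L}_U-\nabla_{\boldsymbol\theta}F\|^2\le\E_U\|\nabla_{\boldsymbol\theta}\mathcal{L}_U\|^2\le\sigma^2$ with $\sigma^2=O(\dim\boldsymbol\theta)$ --- this is the only place the dimension enters. I would then feed these three facts into the textbook analysis of SGD for smooth nonconvex objectives (Ghadimi--Lan): with a suitably tuned constant step size $\eta=\Theta(1/\sqrt T)$, equivalently $\eta\propto\sqrt{(F(\boldsymbol\theta_0)-F^\star)/(L\sigma^2 T)}$, one has $\frac1T\sum_{t<T}\E\|\nabla_{\boldsymbol\theta}F(\boldsymbol\theta_t)\|^2\le 2\sqrt{2L\sigma^2(F(\boldsymbol\theta_0)-F^\star)/T}$, so the best (or a uniformly random) iterate achieves $\E\|\nabla_{\boldsymbol\theta}\mathcal{L}\|^2\le\epsilon^2$ once the right-hand side is at most $\epsilon^2$. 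Substituting $F(\boldsymbol\theta_0)-F^\star\le 1$, $L=O(1)$, and $\sigma^2=O(\dim\boldsymbol\theta)$ gives $T=O(\dim\boldsymbol\theta/\epsilon^4)$, with the explicit constant (here $4$) emerging from tracking the numerical factors above.

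The conceptual skeleton is routine; the real work --- and the main obstacle --- is the smoothness step: differentiating the Choi overlap as it propagates through the nested decomposition of \fig{unitary} and \fig{2qubitunitary} and actually pinning down the Lipschitz constant of $\nabla_{\boldsymbol\theta}F$. One has to be careful that $\dim\boldsymbol\theta$ enters only through the coordinate sum in the variance bound and not through the operator norm of the Hessian (which a naive bound would make grow with $\dim\boldsymbol\theta$, degrading the rate to quadratic); making the dependence genuinely linear requires either a sharper control of the Hessian or, equivalently, treating the smoothness constant as an architecture-dependent constant (alternatively, bounding the gradient variance by $O(1)$ under a bounded-gradient assumption). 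A minor caveat is that \algo{training_QAE} is written with a fixed learning rate, so the theorem should be read as holding when $\eta$ is tuned as above.
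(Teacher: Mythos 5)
Your skeleton --- reduce to the standard nonconvex-SGD guarantee, then verify boundedness, smoothness, and a gradient/variance bound for the circuit loss --- is the same as the paper's, and your per-coordinate derivative bounds (via the parameter-shift structure of $e^{i\theta_j G_j}$ with $G_j^2=I$) are essentially equivalent to the paper's operator-norm bounds $\|\partial_a|\psi_{\boldsymbol\theta}\rangle\|_2\le H_{\max}$ with $H_{\max}=1$. Where you genuinely diverge is in \emph{where} the factor $\dim\boldsymbol\theta$ enters. The paper places it entirely in the smoothness constant: its Lemma on circuit smoothness derives the entrywise Hessian bound $|\partial^2 f/\partial\theta_j\partial\theta_k|\le 4H_{\max}^2$ and then, via the Lipschitz lemma for multivariate functions, concludes $L_1=4MH_{\max}^2=4\dim\boldsymbol\theta$; it then invokes a convergence lemma of the form $T=\mathcal{O}(L_1/\epsilon^4)$ in which the gradient-noise second moment does not appear explicitly. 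You instead assert $L=\mathcal{O}(1)$ and put the dimension into the variance, $\sigma^2=\mathcal{O}(\dim\boldsymbol\theta)$, before applying the Ghadimi--Lan bound $T=\mathcal{O}(L\sigma^2\Delta/\epsilon^4)$.

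The gap is the one you yourself flag in your last paragraph: the claim $L=\mathcal{O}(1)$ is not established. An entrywise Hessian bound of $\mathcal{O}(1)$ only yields an operator-norm (hence smoothness) bound of $\mathcal{O}(M)$ --- exactly what the paper's calculation gives --- and nothing in the layered ansatz of \fig{unitary}--\fig{2qubitunitary} improves this to a dimension-free constant. With the bounds you can actually prove ($L=\mathcal{O}(M)$ and $\sigma^2=\mathcal{O}(M)$), the product form $L\sigma^2$ degrades the rate to $\mathcal{O}((\dim\boldsymbol\theta)^2/\epsilon^4)$, as you note. The paper avoids this not by a sharper Hessian estimate but by choosing the version of the SGD guarantee that depends only on $L_1$ (treating the stochastic-gradient second moment as an absorbed constant), so that $L_1=4\dim\boldsymbol\theta$ translates directly into $T=\mathcal{O}(4\dim\boldsymbol\theta/\epsilon^4)$. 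To repair your argument along your own lines, you should either adopt that form of the convergence lemma with your $L=\mathcal{O}(M)$ smoothness bound and a dimension-free bound on the per-sample gradient norm, or justify one of the two quantities being $\mathcal{O}(1)$; as written, neither is done, so the stated linear dependence on $\dim\boldsymbol\theta$ does not follow from the steps you provide.
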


The proof is provided in  the Appendix.

\section{Applications} \label{sec:app}
In this section we first apply  QAGate  to a basic scenario of the quantum cloud computing, in which we just consider single-round communication between the client and the server. Also, there is just one family of $n$-qubit quantum gates $\{U_x\}_{x\in \set X}$ stored in the server. 
Then, we extend the structure of QAEGate to apply it to two more general scenarios. One allows multiple-round communication between the client and the server. The other enables the server to store different families of $n$-qubit quantum gates.
\subsection{Basic scenario} 

Let us start from   the basic scenario introduced in \fig{task1} of  \sec{task}. The client starts from a generic $n$-qubit input $\rho$  and converts it into an $a$-qubit  state,  by applying the gate $ U^{\boldsymbol{\theta_{\rm ld}}}$ and storing aside $n-a$ qubits in a quantum memory. Then, the $a$ qubits are sent  to the server through a quantum communication link.    The server implements the encoder supermap, converting the original $n$-qubit gate $U_x$ into a quantum channel acting on $a$ qubits.  The encoded channel is applied to an  $a$-qubit state received from the client, and produces an   $a$-qubit state is sent back to the client through the quantum communication link.   Finally, the client performs the gate $ U^{\boldsymbol{\theta_{\rm rd}}}$ on the $a$ qubits received from the server and on the $n-a$ qubits previously stored in the quantum memory. Overall, the operations performed at the client's end implement the decoder, transforming the $a$-qubit channel implemented by the server into an approximation of the target $n$-qubit gate.  

An advantage of the structure of QAEGate is that it can fulfill the server's confidentiality. Specifically, the server wants to forbid the client from obtaining the detailed implementation of the gates, which is described by $x$ in this scenario. For sure, some information about $x$ must be leaked to the client since the client will be approximately granted one access to $U_x$, which could be used to extract information about $x$, but we claim that an arbitrary malicious client could not obtain more information than this minimal amount. Formally, we obtain the following confidentiality guarantee for our model:

\begin{theorem}[Confidentiality guarantee]\label{thm:theo2}
In this protocol, the information about $x$ obtained by an arbitrary malicious client is no larger than the information about $x$ obtainable by accessing $U_x$ once.
\end{theorem}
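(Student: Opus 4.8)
The plan is to prove the statement by exhibiting an explicit simulation: any attack strategy the malicious client could run against the QAEGate protocol can be converted into an attack that uses only a single black-box call to $U_x$, with no loss of information. Since information about $x$ is a monotone under processing that does not touch $x$, this simulation immediately yields the claimed bound. The key observation is that both the encoder supermap $\mathcal E$ and the decoder supermap $\mathcal D$ are fixed circuits built from the parameterized unitaries $U^{\boldsymbol{\theta_{\rm le}}}, U^{\boldsymbol{\theta_{\rm re}}}, U^{\boldsymbol{\theta_{\rm ld}}}, U^{\boldsymbol{\theta_{\rm rd}}}$, all of which are \emph{independent of $x$} by construction (this is emphasized in \sec{QAEGate} and \sec{task}). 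The only place $x$ enters the whole interaction is through the single insertion of $U_x$ in the middle of the encoder.

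First I would set up the information-theoretic framework: model the parameter $x$ as a random variable $X$ with some prior, let $\Lambda_{\rm real}$ be the channel from $X$ to the client's total final quantum register in an execution of the protocol against an arbitrary (possibly deviating) client, and let $\Lambda_{\rm ideal}$ be the channel from $X$ to the output of the best strategy that queries $\mathcal U_x$ exactly once. The goal is to show $\Lambda_{\rm real} = \Lambda_{\rm ideal} \circ \mathcal N$ for some $x$-independent post-processing $\mathcal N$ (indeed one can take $\mathcal N$ trivial once we identify the right ideal strategy), so that by the data-processing inequality any reasonable information measure $I(X;\,\cdot\,)$ satisfies $I(X;\text{client in real protocol}) \le I(X;\text{one query to }U_x)$.

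Second, I would carry out the simulation itself. Consider an arbitrary malicious client. In the honest protocol the client prepares whatever state it likes on its $n$ qubits plus ancillas, applies $U^{\boldsymbol{\theta_{\rm ld}}}$, keeps $n-a$ qubits, sends $a$ qubits to the server. The server's only action is: embed those $a$ qubits together with $|0\rangle^{\otimes(n-a)}$, apply $U^{\boldsymbol{\theta_{\rm le}}}$, apply $U_x$, apply $U^{\boldsymbol{\theta_{\rm re}}}$, discard $n-a$ qubits, return $a$ qubits. A malicious client can \emph{locally} reproduce every step of this interaction that does not involve $U_x$: it already knows $U^{\boldsymbol{\theta_{\rm le}}}, U^{\boldsymbol{\theta_{\rm re}}}, U^{\boldsymbol{\theta_{\rm ld}}}, U^{\boldsymbol{\theta_{\rm rd}}}$ (these are public, $x$-independent), and it can prepare $|0\rangle$ ancillas and discard qubits itself. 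Hence the client can define a single $n$-qubit register on which, after some fixed $x$-independent pre-processing isometry $V_{\rm pre}$ (assembling the client's chosen state, the $|0\rangle$ ancillas, and applying $U^{\boldsymbol{\theta_{\rm le}}}U^{\boldsymbol{\theta_{\rm ld}}}$ appropriately), the only remaining operation is one application of $U_x$, followed by a fixed $x$-independent post-processing channel $\mathcal N_{\rm post}$ (applying $U^{\boldsymbol{\theta_{\rm re}}}$, discarding, applying $U^{\boldsymbol{\theta_{\rm rd}}}$, and then the client's own subsequent measurements). This shows the client's final state in the real protocol is exactly $\big(\mathcal N_{\rm post}\circ \mathcal U_x \circ \mathcal V_{\rm pre}\big)(\text{client's input})$ — a strategy making exactly one call to $U_x$ — which establishes the inequality by data processing.

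The main obstacle, and the point that needs the most care, is being precise about what ``information about $x$ obtainable by accessing $U_x$ once'' means and ensuring the simulation does not secretly need more than one query: one must check that the $n-a$ discarded output qubits and the $|0\rangle^{\otimes(n-a)}$ padding inputs do not give the server (hence, after simulation, effectively require the client) a ``second handle'' on $U_x$ — they do not, because they are all processed by a single pass through $U_x$ inside the encoder, so the composite is genuinely a one-query supermap applied to $U_x$. A secondary subtlety is that the statement is about an \emph{arbitrary} malicious client, so the pre- and post-processing maps $\mathcal V_{\rm pre}, \mathcal N_{\rm post}$ depend on the client's strategy but crucially never on $x$; I would make this explicit and then invoke the data-processing inequality (equivalently, monotonicity of accessible information / Holevo information / any $f$-divergence-based measure) to conclude. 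If one wants a fully representation-independent statement, I would phrase the conclusion as: the Holevo information (or indeed mutual information for any measurement the client performs) between $X$ and the client's output is upper bounded by the supremum of the same quantity over all strategies calling $U_x$ once, which is exactly the content of the theorem.
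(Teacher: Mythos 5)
Your proposal is correct and follows essentially the same route as the paper's own proof: both arguments rest on the observation that the encoder is an $x$-independent, one-use supermap, so any malicious client's circuit composed with it is just one particular element of the set of all circuits querying $U_x$ once, whence the maximal extractable mutual information about $X$ is bounded by the maximum over that larger set. Your explicit simulation via $\mathcal V_{\rm pre}$ and $\mathcal N_{\rm post}$ and the appeal to data processing is a slightly more spelled-out version of the paper's set-inclusion step $\max_{\mathcal D'} I(X;Y') \le \max_{\mathcal F} I(X;Y)$, but it is the same idea.
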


\begin{proof}

We first consider how much information about $x$ can be extracted given one copy of $U_x$. We assume $x$ follows a certain distribution, and we denote the random variable of $x$ as $X$. The most general way to extract information from a gate $U_x$ is to insert it into a quantum circuit and make measurements on the output of the circuit, as shown in \fig{appendix_conf}. Let the output of the circuit to be $Y$, which is a random variable correlated with $X$. We write $Y=\mathcal{F}(\mathcal{U}_X)$ to denote the relationship between $X$ and $Y$. The information one can know about $X$, by obtaining the value of $Y$, is the mutual information $I(X;Y)$. The maximum amount of information one can learn about $X$ is obtained by optimizing the circuit, which is $I_{\max}:=\max_{\mathcal{F}} I(X;Y)$. Since any party who is able to access $U_x$ could obtain this amount of information, the server, who grants the client one use of $U_x$, would inevitably leak $I_{\max}$ amount of information about $x$ to the client.

\begin{figure} \centering
\subfigure[A general quantum circuit $\mathcal{F}$ for extracting information about $x$ from $U_x$.] {
 \label{fig:appendix_conf}
\includegraphics[width=50mm]{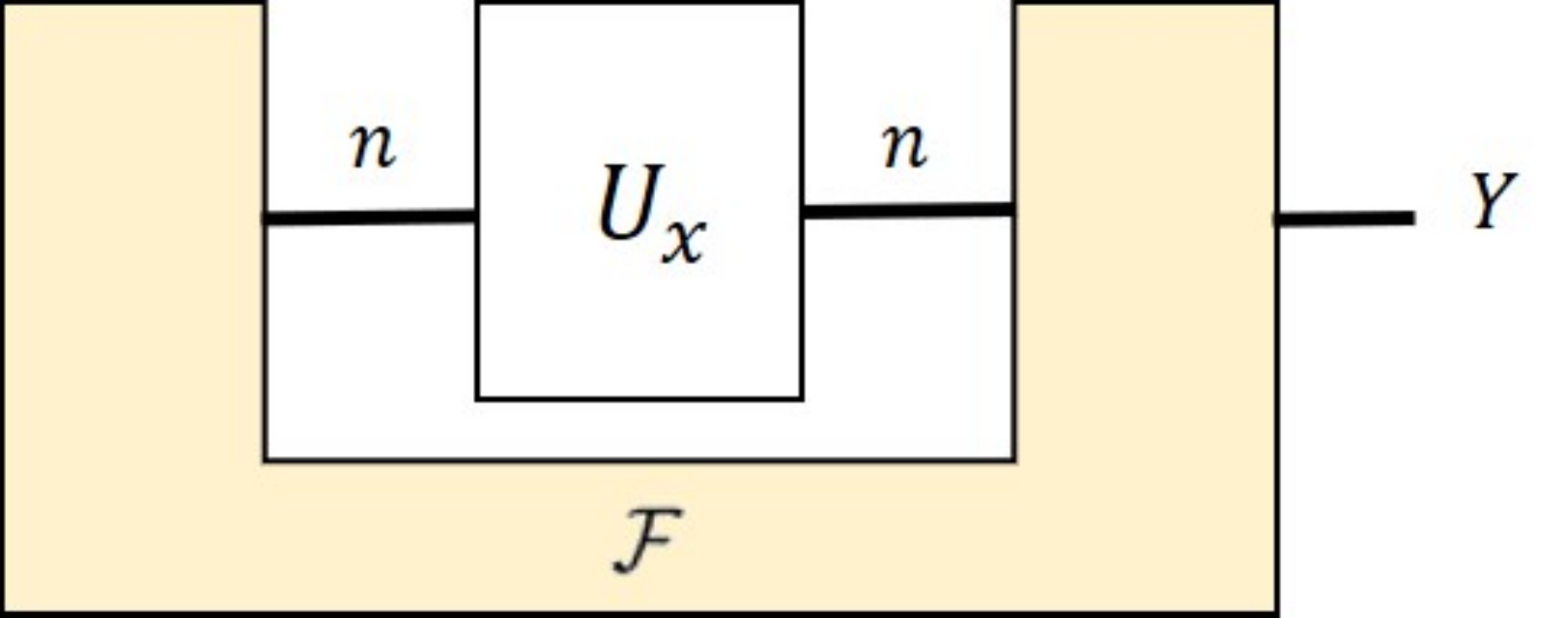} 
}
\hspace{.3in}
\subfigure[A general quantum circuit utilized by a malicious client to extract information about $x$ by communicating with the server.]
{
\label{fig:appendix_conf2}
\includegraphics[width=70mm]{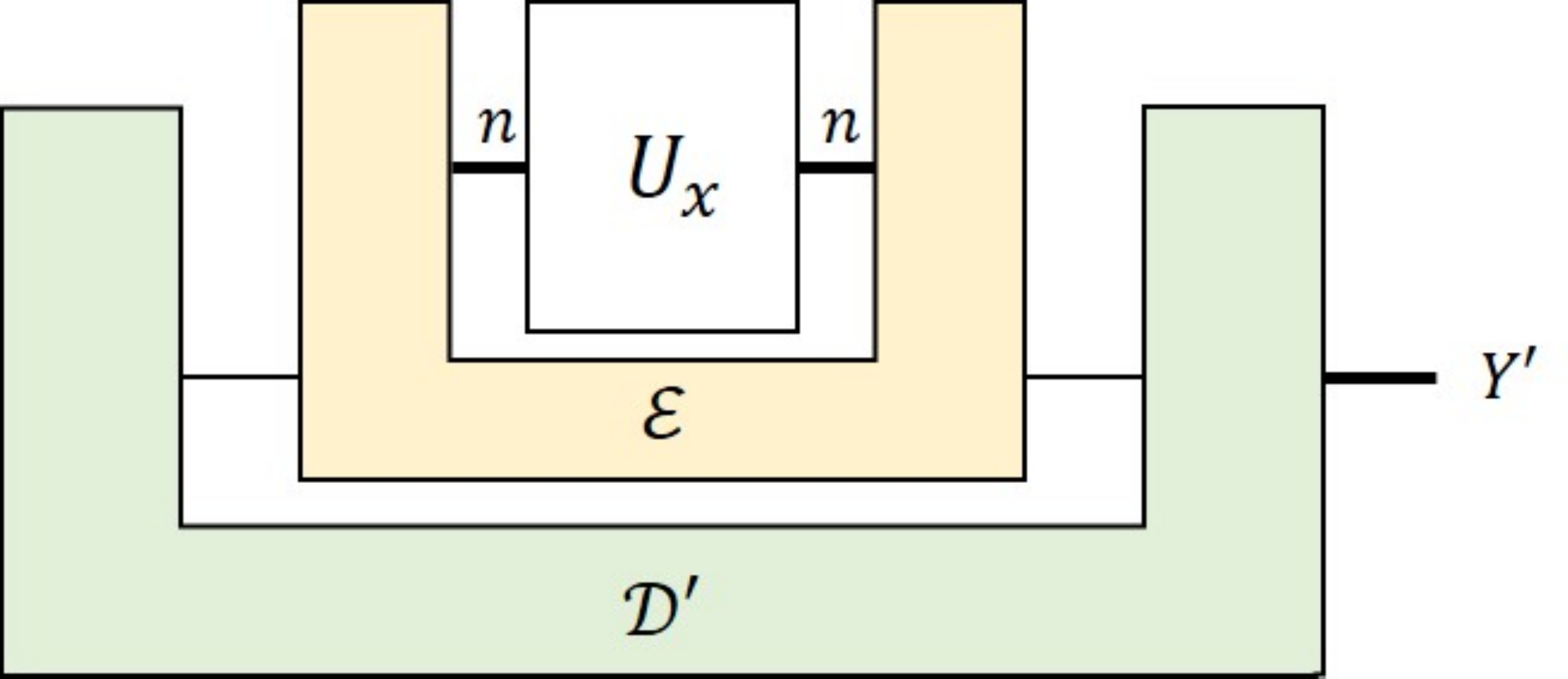}
}
\caption{}
\label{fig:appendix}
\end{figure}

Next, we consider a malicious client who wants to obtain the most information about $x$. The server interacts with the client via the encoder, which is equivalent to that the client has access to the encoded gate, which is a  channel $\mathcal{E}(\mathcal{U}_x)$. The most general way for the client to extract information about $x$ is to insert  $\mathcal{E}(\mathcal{U}_x)$ into a circuit and measure the output of the circuit, as shown in \fig{appendix_conf2}. Let this circuit be $\mathcal{D}'$ and its output be $Y'$, which is related to $X$ by $Y'=(\mathcal{D}'\circ\mathcal{E})(\mathcal{U}_X)$. The maximum information the client could obtain is $\max_{\mathcal{D}'}I(X;Y')$.

Now, since $\mathcal{E}$ is determined by the server, the circuit $\mathcal{D}'\circ\mathcal{E}$ with variable $\mathcal{D}'$ forms a subset of the range of $\mathcal{F}$, which contains all possible circuits taking $U_x$ as input. Therefore, the optimization over $\mathcal{D}'$ is upper bounded by the optimization over $\mathcal{F}$, and thus we have
\begin{align}
    \max_{\mathcal{D}'}I(X;Y') \leq I_{\max} \,.
\end{align}
The equation above proves \thm{theo2}, since the left hand side is the maximal information obtainable by an arbitrary malicious client, and the right hand side is the inevitable information leakage given one access to $U_x$.

\end{proof}

\subsection{Multi-round communication scenario}
In this scenario, we relax the assumption of one-round communication and the client can communicate with the server for multiple rounds.
In each round, the client can send to and receive from the sever an $a$-qubit quantum state  through the quantum communication link. For this multi-round communication scenario, the server can exploit $U_x$ more than once. Here we depict the case of two-round communication in \fig{task2}. 
\begin{figure}[ht]
\centering
\centerline{\includegraphics[width=80mm]{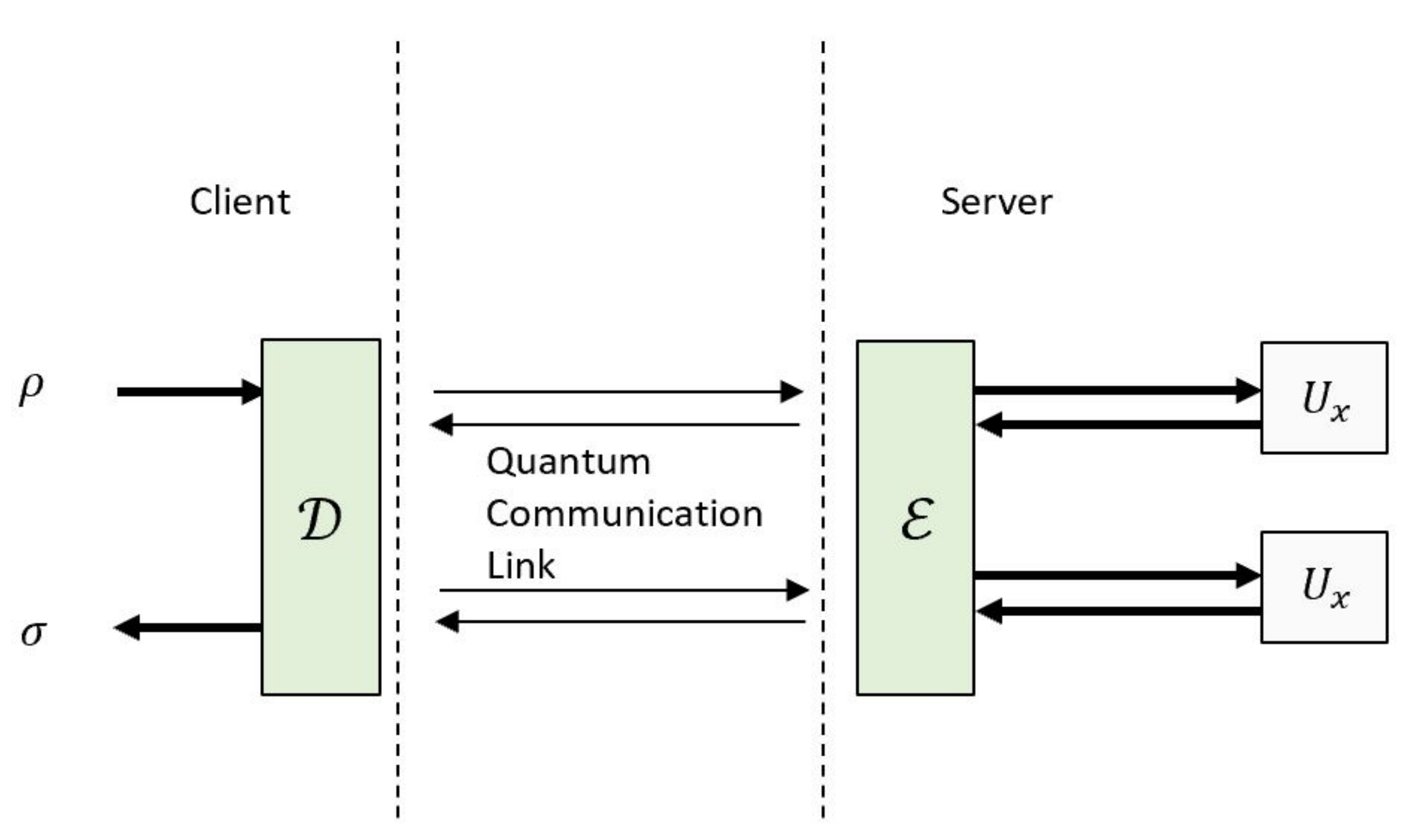}}
\caption{The multi-round communication scenario. The client can communicate with the server for multiple rounds.}
\label{fig:task2}
\end{figure}

For the optimal performance, a general protocol may utilize different encoders and decoders for each round of communication. Here we keep the remaining $n-a$ qubits of the output of the encoder in the first round and regard it as a part of the input of the encoder in the second round. We depict such a variation of the structure of the QAEGate model in \fig{struc2}. The encoders and decoders of the first and second round are separately parameterized.
\begin{figure}[ht]
\centering
\centerline{\includegraphics[width=\columnwidth]{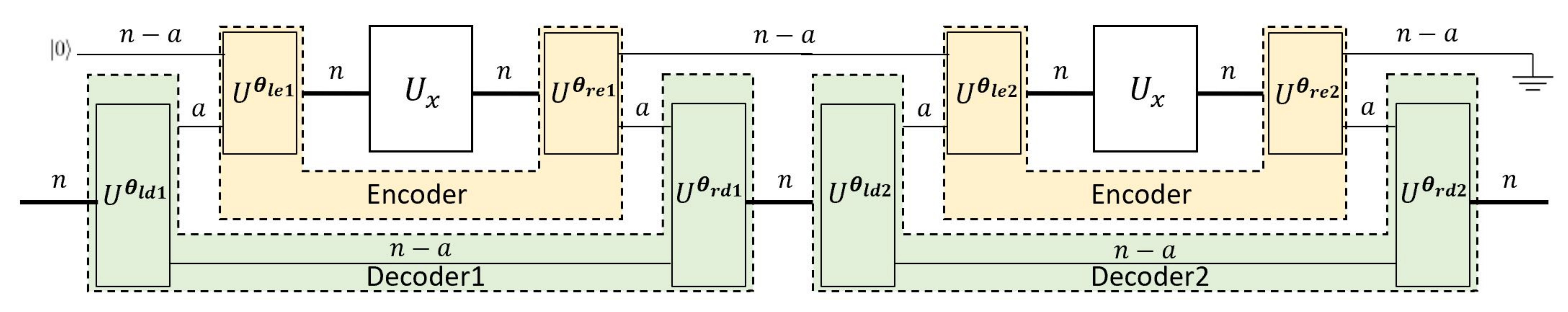}}
\caption{QAEGate for multi-round communication scenario.}
\label{fig:struc2}
\end{figure}

\subsection{Sequence of gates scenario}
In this scenario, we consider a more general situation that the client expects the server to execute a sequence of gates from different parametric families. In \fig{task3}, we depict the simplest case that the length of the sequence is $2$. Here there are two classes of quantum gates, $\{U_{x}\}_{x\in \set X}$ and $\{U_{y}\}_{y\in \set Y}$, stored in the server.  The client wants to obtain $\map{U}_{y}(\map{U}_{x}(\rho))$ by communicating with the server.

\begin{figure}[ht]
\centering
\centerline{\includegraphics[width=80mm]{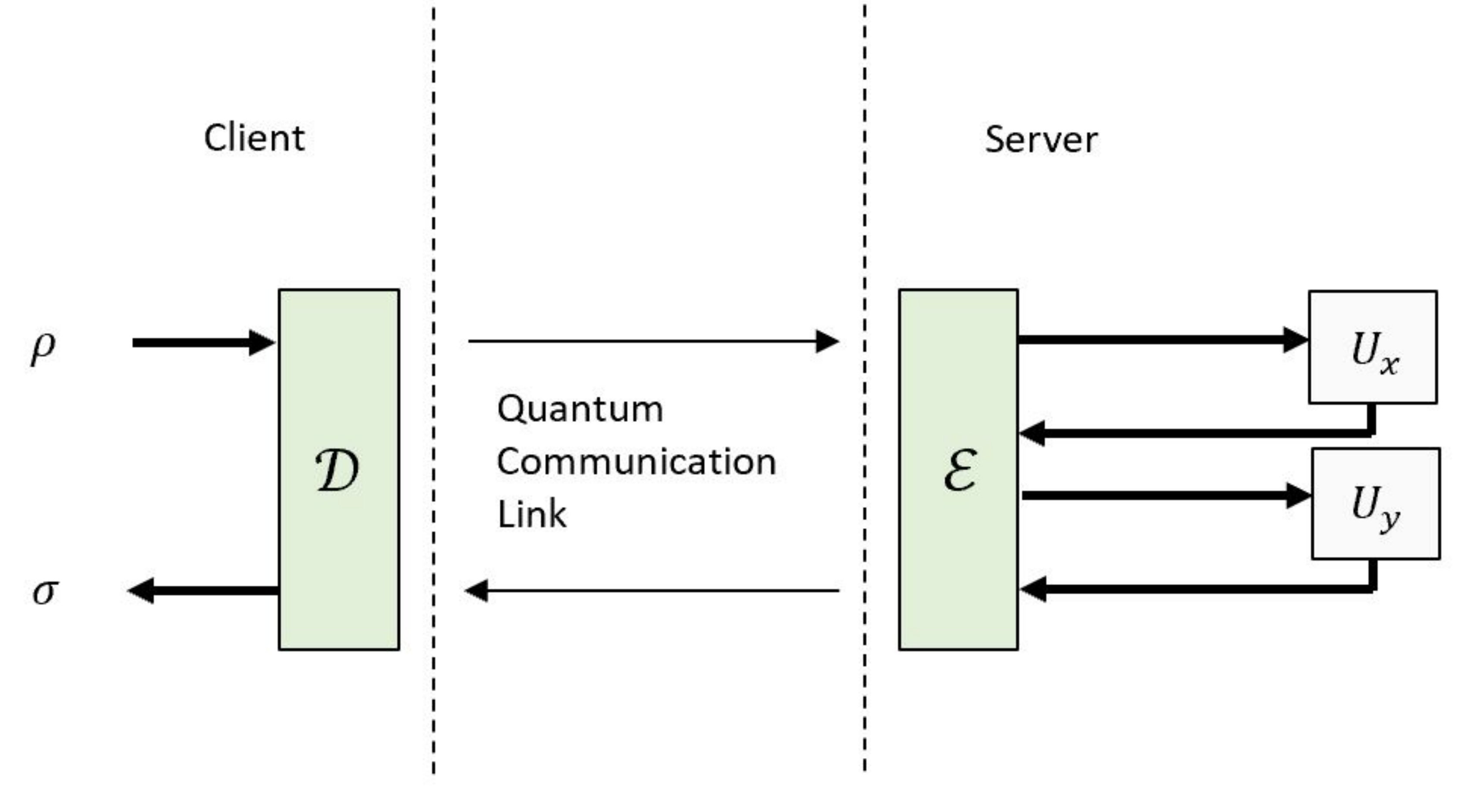}}
\caption{Sequence of gates scenario. The client expects the server to execute a sequence of gates from different parametric families.}
\label{fig:task3}
\end{figure}

In order to handle this scenario, we design the encoder and the decoder with different structures and depict the modified QAEGate model in \fig{struc3}. 

\begin{figure}[ht]
\centering
\centerline{\includegraphics[width=\columnwidth]{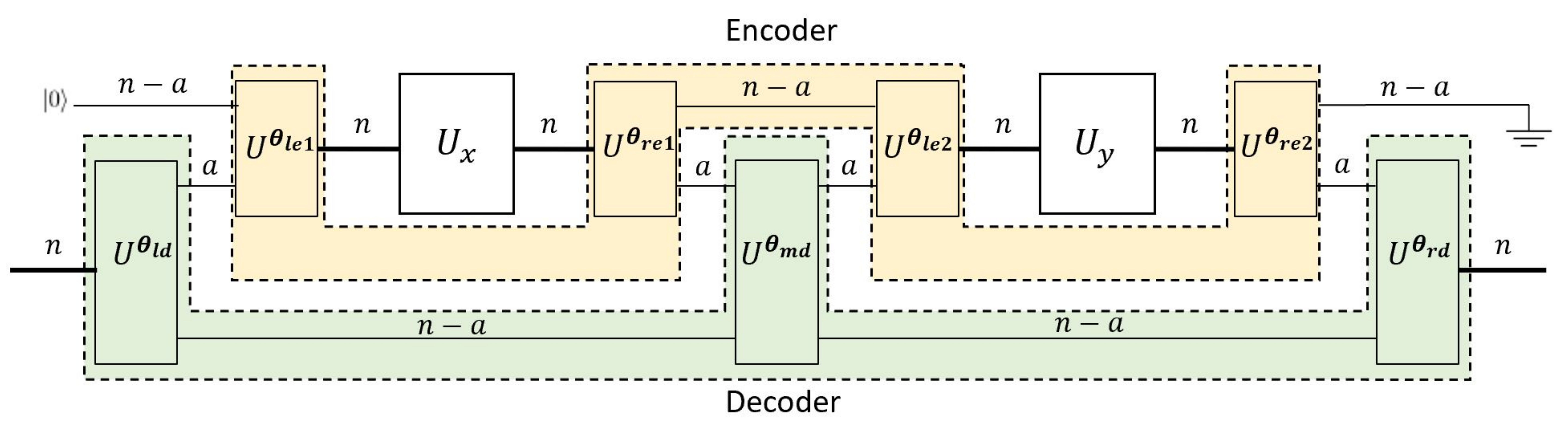}}
\caption{QAEGate for sequence of gates scenario.}
\label{fig:struc3}
\end{figure}

The encoder is composed of four parameterized unitary operators, denoted by $U^{\boldsymbol{\theta_{le1}}}$, $U^{\boldsymbol{\theta_{re1}}}$, $U^{\boldsymbol{\theta_{le2}}}$ and $U^{\boldsymbol{\theta_{re2}}}$. The first two operators are placed on both sides of $U_x$ while the other two are placed on both sides of $U_y$. The decoder is composed of three parameterized parameterized unitary operators, denoted by $U^{\boldsymbol{\theta_{\rm ld}}}$, $U^{\boldsymbol{\theta_{md}}}$ and $U^{\boldsymbol{\theta_{\rm rd}}}$.

\section{Numerical Experiments}\label{sec:exper}
In this section, we examine the performance of our proposed models by numerical simulation. We show the effectiveness of our proposed models in different settings for gates based on the Heisenberg model. All simulations are implemented by Tensorflow quantum \cite{broughton2020tensorflow} and performed on a single GPU. We train the models by stochastic gradient descent in all of the experiments.

\subsection{Heisenberg model}
Heisenberg model \cite{baxter2016exactly} is a famous statistical mechanical model, which describes a magnetic system of half spins. It can be defined by the Hamiltonian
\begin{align*}
\tilde{H} =& -\frac{1}{2}\sum_{j=1}^{n-1}(J_{\rm x}\sigma_j^x\sigma_{j+1}^x+J_{\rm y}\sigma_j^y\sigma_{j+1}^y+J_{\rm z}\sigma_j^z\sigma_{j+1}^z) \\ & -\frac12 \sum_{j=1}^n h\sigma_j^z,
\end{align*}
where $J_{\rm x}$, $J_{\rm y}$, $J_{\rm z}$ and $h$ are constants representing the strength of the coupling and the external magnetic field, respectively.
$\sigma^x$, $\sigma^y$, $\sigma^z$ are the Pauli matrices. $n$ is the number of qubits in the quantum system. We define Heisenberg gates as the quantum gates which represent evolution defined by a Hamiltonian operator of Heisenberg model. These Heisenberg gates can be represented by
\begin{align*}
U(t) = \exp(-i\tilde{H} t),
\end{align*}
where $\tilde{H}$ is the Hamiltonian of a Heisenberg model and $t$ is the evolution time.

\subsection{Simulation results}
\paragraph{Basic Scenario.} 
In this scenario, we set $\{U_{x}\}$ stored in the sever as a class of Heisenberg gates with $J_{\rm x} = J_{\rm y} = J_{\rm z} = 0.1$ and $h = 0.5$. 

Here we define the evolution time $t$ as the parameter $x$, unknown to the client. The client can only transmit $a=1$ qubit or $a=2$ qubits through the quantum communication link. We consider three cases, $n=2,3,4$, in the experiment. We exploit a training set composed of $50$ gates to optimize the compression model and validate it on the test set composed of other $10$ gates in each experiment.

We show the training curves of the QAEGate model in \fig{s1}. Here the x-axis represents the epoch number of the training, and the y-axis represents the overlap between the Choi states of the original gate and the decoded gate. We can see that all of the training processes converge finally, which conforms to our convergence analysis in \thm{theo1}. 

\begin{figure}[ht]
\centering
\centerline{\includegraphics[width=80mm]{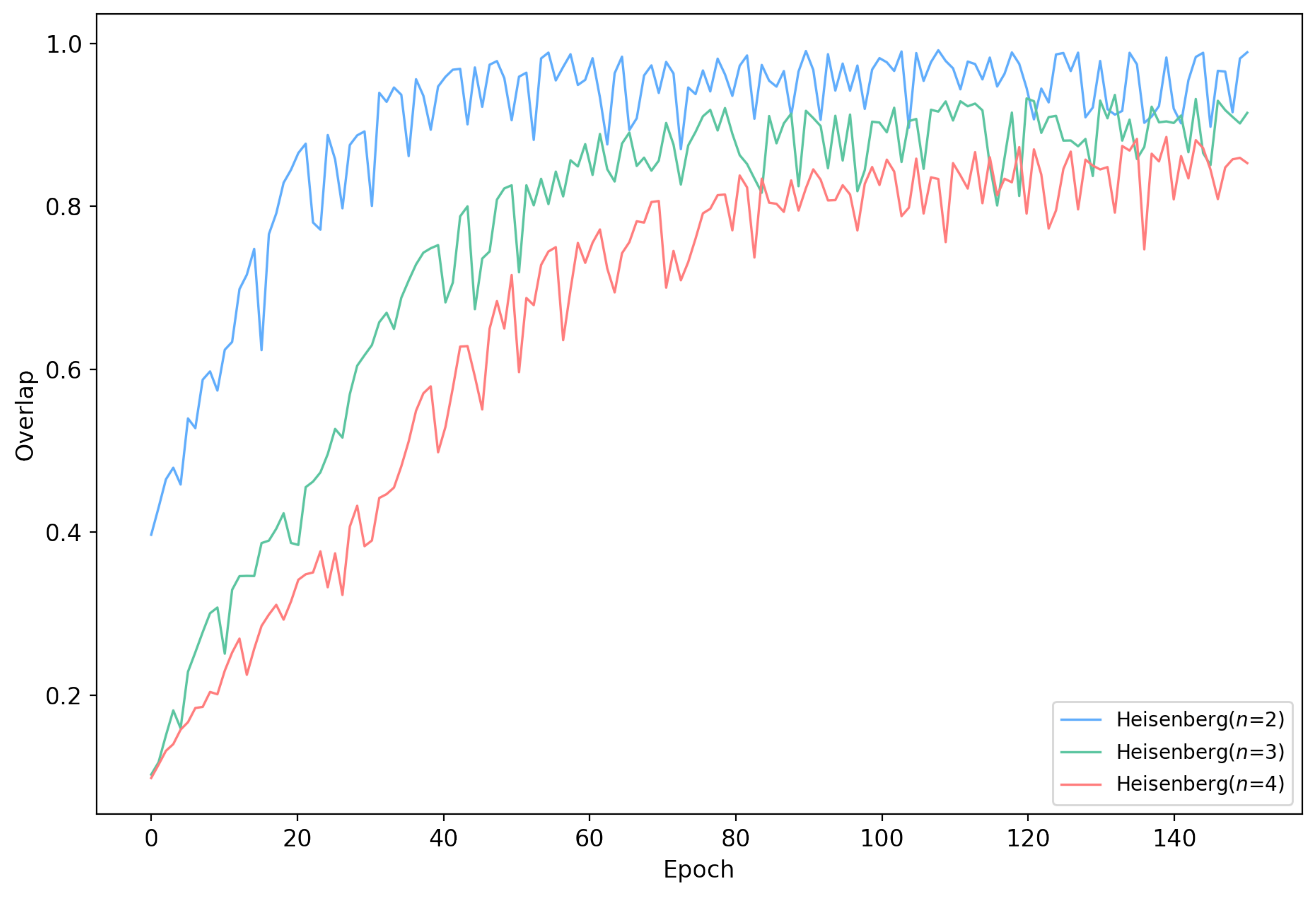}}
\caption{Training curves of the QAEGate for the basic scenario. We train three QAEGate models for three classes of Heisenberg gates, $n=2,3,4$, respectively.  All of training curves converge finally, which conforms to our convergence analysis.}
\label{fig:s1}
\vskip -0.1in
\end{figure}

Furthermore, we present the performance of our QAEGate model on test sets in \fig{s2_1}.  The y-axis represents the average overlap values between the Choi states of the decoded quantum gate and the original quantum gate.  We can see that our QAEGate model achieves satisfactory performance in all of the experiments, which confirms the effectiveness of our proposed communication model empirically. However, we can observe that QAEGate performs better when the dimension of the class of Heisenberg gates $\{U_{x}\}$ is small. 

\begin{figure}[ht]
\centering
\centerline{\includegraphics[width=80mm]{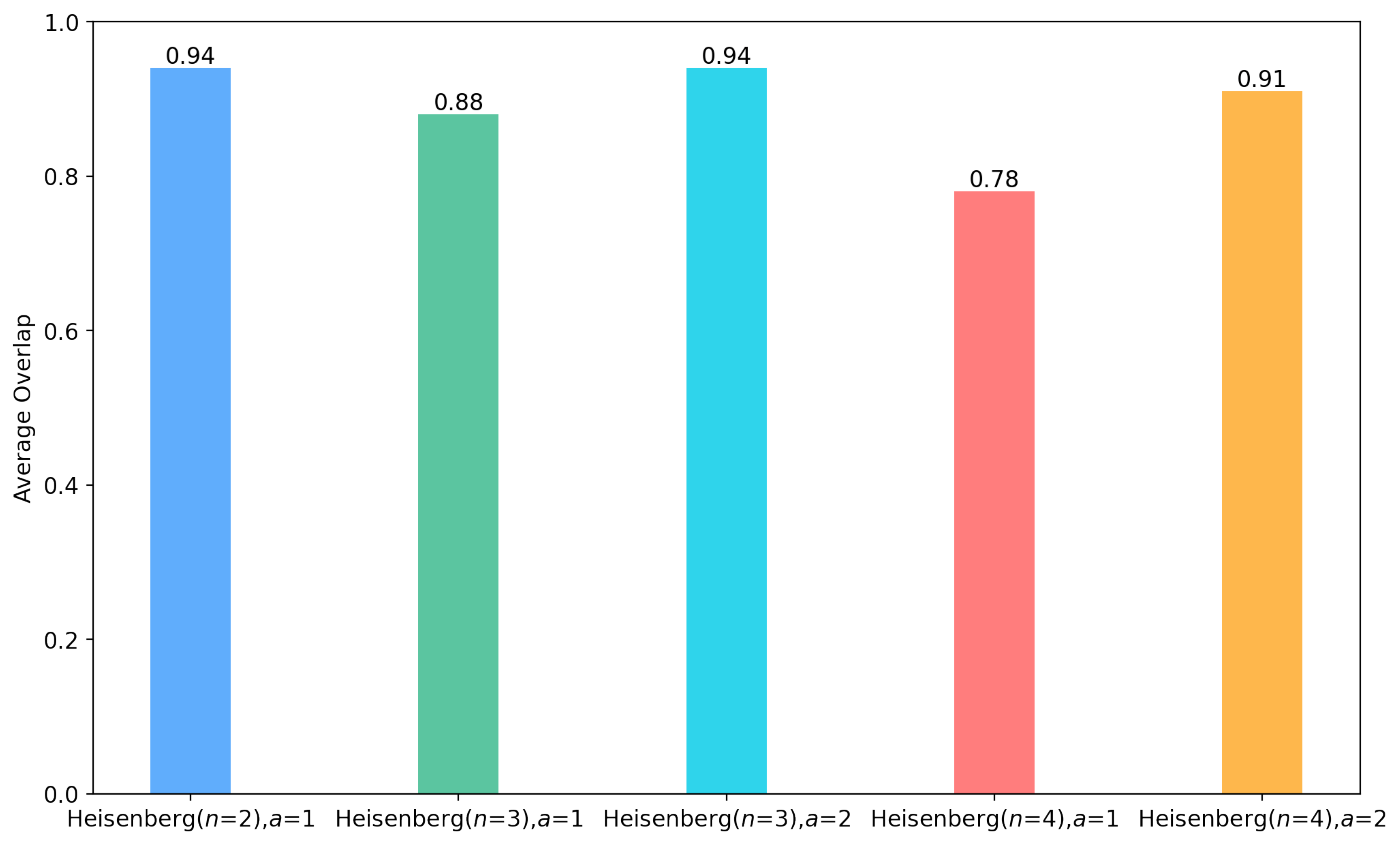}}
\caption{The performance of our QAEGate model on test sets of the basic scenario.}
\label{fig:s2_1}
\vskip -0.1in
\end{figure}

\paragraph{Multi-round communication scenario.}
In this scenario, we allow two-round communication between the server and the client while they can only transmit $a=1$ qubit through the quantum communication link at a time. 

We present the performance of our QAEGate model on test sets of the two-round communication scenario in \fig{s2_2}. The results indicate that our proposed model performs better when additional round of communication is available, compared with the basic scenario with $a=1$. Especially for the case of $n=4$, the average overlap value increases by $0.12$ compared with the basic single-round scenario. We also compare the case of $a=2$ in the basic scenario and the case of $a=1$ in the two-rounds communication scenario. In both cases, two qubits are sent to and received from the server in total. We find that our proposed models have similar performance. 

\begin{figure}[ht]
\centering
\centerline{\includegraphics[width=80mm]{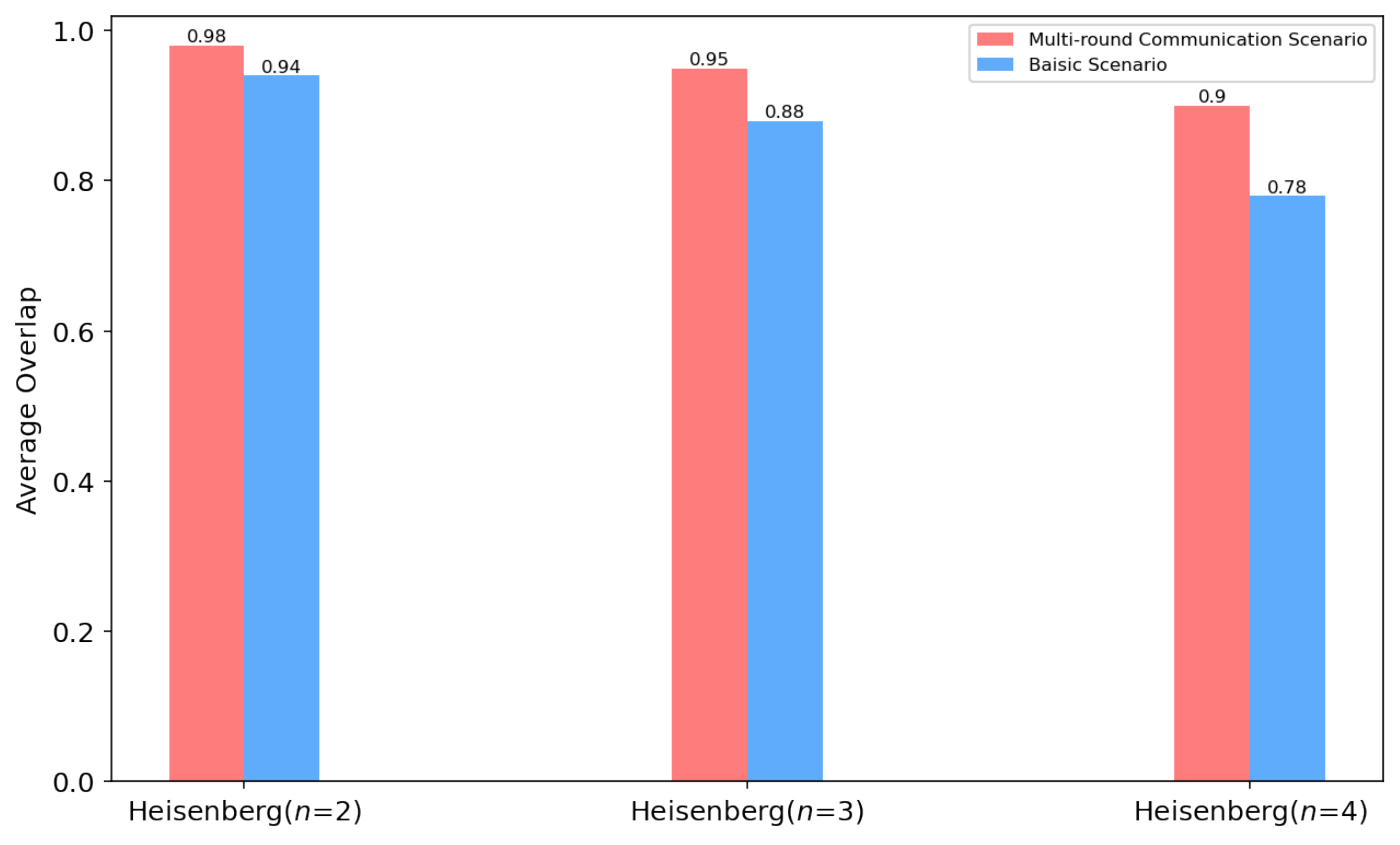}}
\caption{The performance of our QAEGate model on test sets of the multi-round communication scenario. }
\label{fig:s2_2}
\vskip -0.1in
\end{figure}

\paragraph{Sequence of gates scenario.}
In this scenario, we set $\{U_{x}\}$ stored in the sever as a class of Heisenberg gates with $J_{\rm x} = J_{\rm y} = J_{\rm z} = 0.1$ and $h = 0.5$ and $\{U_{y}\}$ as a class of Heisenberg gates with $J_{\rm x} = J_{\rm y} = 0.1$ and $J_{\rm z} = h = 0.5$. Here We define the evolution times of them as $x$ and $y$ whose values can be different.  The client still can only transmit $a=1$ qubit through the quantum communication link. We also consider three cases, $n=2,3,4$, in the experiment. The simulation results on test sets of this scenario are shown in \fig{s2_3} and our proposed model achieved satisfactory performance in all three cases.

\begin{figure}[ht]
\centering
\centerline{\includegraphics[width=80mm]{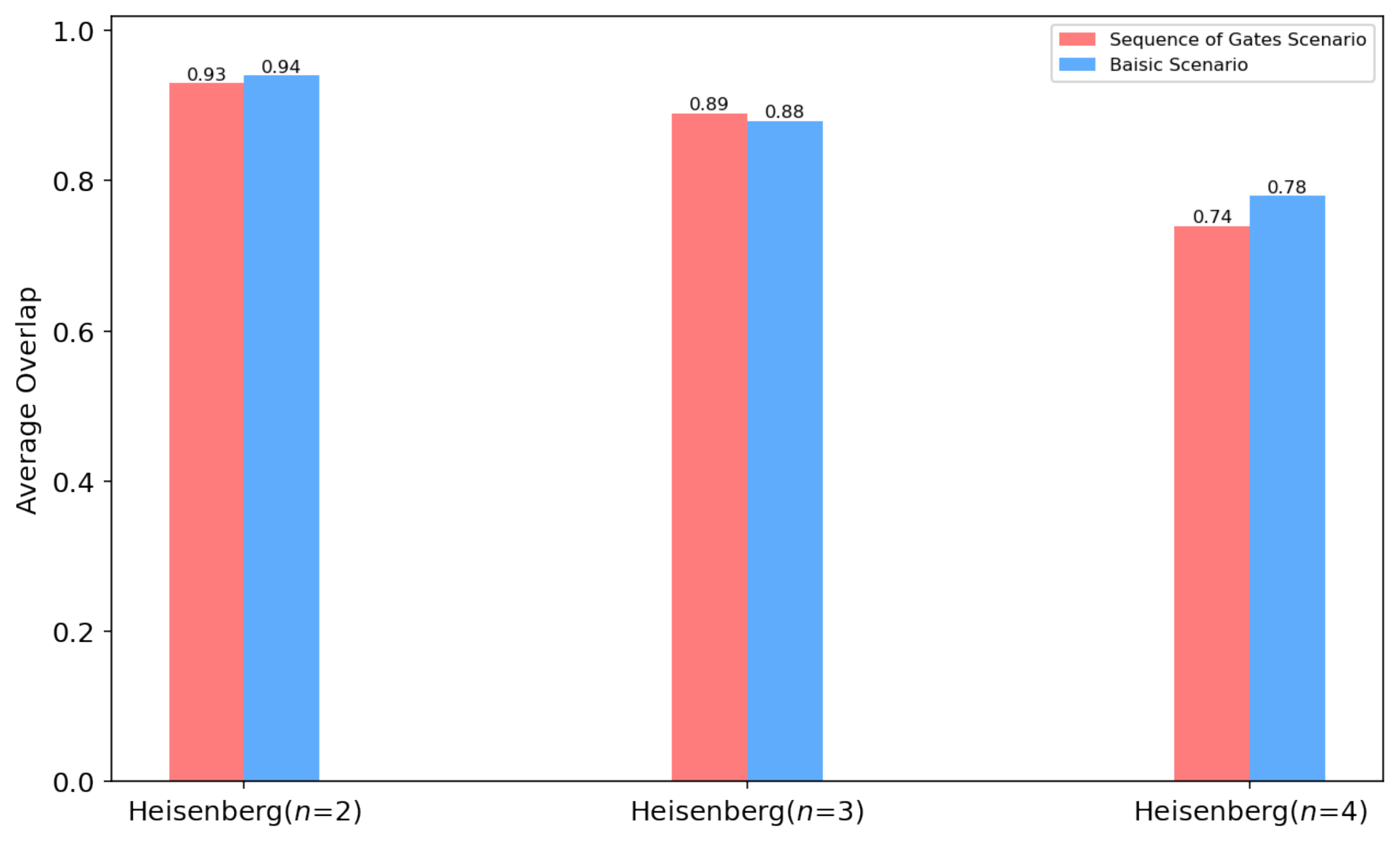}}
\caption{The performance of our QAEGate model on test sets of the sequence of gates scenario. }
\label{fig:s2_3}
\vskip -0.1in
\end{figure}

\section{Discussion and conclusions}\label{sec:conclu}
We proposed a quantum machine learning model, QAEGate, for communication-efficient quantum cloud computing, and applied it to single-round and multi-round communication scenarios. In the single-round scenario, the protocol avoids leakage of information about the server's computation and achieves a satisfactory performance in terms of overlap with the target gate versus the number of qubits in its compressed version.  By adjusting the structure of QAEGate, we applied our method to other two scenarios, involving multiple gates and multiple rounds of communication between the client and server. 

The performance of our method has been tested numerically in a number of  examples. Due to the exponential complexity of simulating quantum gates on classical computers,  such analysis is necessarily limited to  scenarios involving a small number of qubits. On the other hand, our method is meant to be eventually run  on real quantum computers, on which the complexity of QAEGate grows only polynomially. 
  While the algorithm The current progress of quantum hardware suggests that a real-world test may take place not too far in the future. 

An application of the techniques developed in this paper is learning unknown quantum gates \cite{bisio2010optimal,mo2019quantum,sedlak2020probabilistic}. In this task, the goal is to imprint an unknown gate into the state of a quantum memory from which the gate can be accessed at a later time. Our QAEGate can be adapted to this task by adjusting the structure of the encoder and the decoder so that quantum gates can be encoded into quantum states.    Referring to Figure \ref{fig:implementation},  the idea is to  choose different number of qubits $a$ and $a'$ for the input and output wires of the encoder, respectively. In particular, one can set $a$ to zero, meaning that the encoder produces a quantum state of $a'$ qubits. The number of qubits $a'$ can be regarded as the size of the quantum memory of a learning machine designed to learn a generic gate from a given parametric family of quantum gates.  An interesting feature of our method is that it allows one to set the memory size ({\em i.e.} to fix $a'$) and to achieve approximate learning subject to constraints on the available quantum memory.

Another interesting direction of future research is  assessing  the power of quantum processors \cite{bishop2017quantum,moll2018quantum}.  It has been often pointed out that the number of qubits in a quantum processor is not an appropriate measure, because the presence of noise generally limits the set of achievable computations.  Hence, it is important to have more inclusive measures, that take into account not only the raw number of qubits, but also the size of the set of operations implementable on them.  The  approach of quantum gate compression  sheds light into this problem  by providing a rigorous notion of ``effective size'' of a quantum processor, defined as the minimum number of qubits on which the operations of the processor can be compressed with sufficiently high fidelity.  

More specifically,  the set of operations implemented by  a given quantum  processor  can be regarded as a parametric family of (generally noisy) quantum gates. When the processor is not universal, such a family may be compressible into operations acting on a smaller number of qubits, meaning that the computations achieved by the processor can be in principle simulated with a smaller quantum computer.  The size of this minimal simulator can then be regarded as a  measure of the computing power of the original quantum processor. Such a measure is  independent of the  hardware implementation, and could be in principle evaluated by connecting the given processor to a reference quantum computer that runs  the  QAEGate algorithm, or a generalization of it working with noisy input gates.     

A limitation  of the above approach is that the evaluation of the minimum number of qubits  requires  a trusted quantum computer of the same size of the original processor. On the other hand, once a sufficiently large universal quantum computer becomes available,  it can be used as a standard reference for assessing the power of other quantum processors, and to construct efficient simulations that can be run on universal quantum computers of smaller size.


\bibliography{QAEGate}
\bibliographystyle{plain}





\onecolumn\newpage
\appendix

\section{Proof of \thm{theo1}}
According to \cite{allen2017natasha}, we have the following lemma about convergence of SGD for nonconvex functions:
\begin{lemma}\label{lem:converge}
If a function $f(x)$ is $L_1$-smooth and we perform SGD update $x_{t+1}\leftarrow x_t - \eta \nabla f_i(x_t)$ each time for a random $i\in [n]$, then the convergence rate is $T = \mathcal{O}(\frac{L_1}{\epsilon^4})$ for finding $\mathbb{E}[\Vert \nabla f(x) \Vert^2]\leq \epsilon^2$.
\end{lemma}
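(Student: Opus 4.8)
The plan is to establish the standard nonconvex SGD rate through the descent lemma combined with a telescoping argument, treating $f = \frac{1}{n}\sum_{i=1}^n f_i$ as a finite sum for which each stochastic gradient $\nabla f_i(x)$ is an unbiased estimator of the full gradient, i.e. $\mathbb{E}_i[\nabla f_i(x)] = \nabla f(x)$ for $i$ drawn uniformly from $[n]$. Two auxiliary hypotheses are needed and I would state them explicitly at the outset: a uniform second-moment bound $\mathbb{E}_i[\Vert \nabla f_i(x)\Vert^2] \le G^2$ on the stochastic gradients, and a finite infimum $f^\ast := \inf_x f(x) > -\infty$. Both are harmless in our setting, since the loss $\mathcal{L} = 1 - f$ takes values in $[0,1]$ and the gradients of a parameterized unitary circuit are uniformly bounded.

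First I would invoke the descent lemma, which follows directly from $L_1$-smoothness: for the update $x_{t+1} = x_t - \eta\nabla f_i(x_t)$,
\begin{equation}
f(x_{t+1}) \le f(x_t) - \eta\langle \nabla f(x_t), \nabla f_i(x_t)\rangle + \frac{L_1\eta^2}{2}\Vert \nabla f_i(x_t)\Vert^2.
\end{equation}
Taking expectation over the random index $i$ conditional on $x_t$ and using unbiasedness collapses the cross term to $-\eta\Vert\nabla f(x_t)\Vert^2$, while the second-moment bound controls the last term, yielding
\begin{equation}
\mathbb{E}[f(x_{t+1})] \le f(x_t) - \eta\Vert\nabla f(x_t)\Vert^2 + \frac{L_1\eta^2 G^2}{2}.
\end{equation}

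Next I would rearrange this into $\eta\Vert\nabla f(x_t)\Vert^2 \le f(x_t) - \mathbb{E}[f(x_{t+1})] + \frac{L_1\eta^2 G^2}{2}$, sum over $t=0,\dots,T-1$, and take total expectation using the tower property. The right-hand side telescopes to $f(x_0) - \mathbb{E}[f(x_T)] + \frac{T L_1 \eta^2 G^2}{2}$, which is at most $(f(x_0)-f^\ast) + \frac{T L_1\eta^2 G^2}{2}$ by the infimum bound. Dividing by $\eta T$ then bounds the averaged (hence the minimum) expected squared gradient norm along the trajectory,
\begin{equation}
\min_{0\le t<T}\mathbb{E}[\Vert\nabla f(x_t)\Vert^2] \le \frac{f(x_0)-f^\ast}{\eta T} + \frac{L_1\eta G^2}{2}.
\end{equation}

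Finally I would choose the step size to balance the two terms: setting $\eta = \Theta(\epsilon^2/(L_1 G^2))$ drives the second term below $\epsilon^2/2$, after which $T = \Omega\big(L_1 G^2 (f(x_0)-f^\ast)/\epsilon^4\big)$ iterations force the first term below $\epsilon^2/2$, giving the advertised $T = \mathcal{O}(L_1/\epsilon^4)$ once $G^2$ and $f(x_0)-f^\ast$ are absorbed into the constant. The main conceptual obstacle is that nonconvexity precludes any control on the distance to a minimizer, so the target quantity must be the expected gradient norm rather than the optimality gap; this is precisely what forces the characteristic $\epsilon^{-4}$ scaling (in contrast to the $\epsilon^{-2}$ one would get for a fixed step size that is not tied to $\epsilon$), and why the step size must be coupled to the accuracy $\epsilon$. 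The one technical subtlety to handle with care is that the cross term vanishes only in \emph{conditional} expectation, so the telescoping must be carried out under expectation over the entire random trajectory rather than a single step.
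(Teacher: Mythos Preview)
Your argument is correct and is precisely the classical derivation of the nonconvex SGD rate: descent lemma from $L_1$-smoothness, conditional unbiasedness to collapse the cross term, telescoping over the trajectory, and a step size $\eta = \Theta(\epsilon^2/L_1)$ to balance the two error contributions. The auxiliary hypotheses you flag (a uniform second-moment bound $G^2$ on the stochastic gradients and a finite infimum $f^\ast$) are indeed necessary for the argument and are implicitly absorbed into the big-$\mathcal{O}$ constant in the lemma as stated.

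The only comparison to make is that the paper does not prove this lemma at all: it is quoted as a known result from \cite{allen2017natasha} and used as a black box to deduce Theorem~\ref{thm:theo1} once $L_1$-smoothness of the QAEGate loss has been established (Theorem~\ref{thm:theo5}). Your write-up therefore supplies more than the paper does for this particular statement; if anything, you could shorten it to a citation in the same spirit, or keep it as a self-contained appendix argument.
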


Therefore, we first discuss the smoothness property of the loss function $\mathcal{L}(\boldsymbol\theta)$ in the training of QAEGate model. Note that the first- and second-order Lipschitz smoothness are equivalent to Lipschitz continuity of the first- and second-order derivatives. So we start with a lemma about the Lipschitz continuity of multivariate functions \cite{sweke2020stochastic}.

\begin{lemma}\label{lem:lip}
Given some function $f \colon \R^M \to \R$, if all partial derivatives of $f$ are continuous, then for any $a, b \in \R$ the function $f \colon [a, b]^M \to \R$ is $L$-Lipschitz continuous with
\begin{align}
	L = \sqrt{M}\max_{j \in \{1,\dots,M\} } \sup_{\boldsymbol{x}\in [a, b]^M} \left|\frac{\partial f(\boldsymbol{x})}{\partial x_j}\right|.
\end{align}
\end{lemma}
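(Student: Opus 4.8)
The plan is to reduce the multivariate statement to a one-dimensional integral along the line segment joining any two points of the domain, and then bound the integrand by the gradient. The crucial structural fact is that $[a,b]^M$ is convex: for any $\boldsymbol{x}, \boldsymbol{y} \in [a,b]^M$ the segment $\gamma(t) := \boldsymbol{x} + t(\boldsymbol{y} - \boldsymbol{x})$, $t \in [0,1]$, stays inside $[a,b]^M$, so every evaluation of $f$ and of its partial derivatives along $\gamma$ remains in the region where the hypotheses apply. First I would invoke the standard fact that continuity of all partial derivatives implies that $f$ is (totally) differentiable with continuous gradient, which licenses the chain rule on the composition $g(t) := f(\gamma(t))$.

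Applying the chain rule gives $g'(t) = \nabla f(\gamma(t)) \cdot (\boldsymbol{y} - \boldsymbol{x})$, and the fundamental theorem of calculus then yields
\begin{align}
f(\boldsymbol{y}) - f(\boldsymbol{x}) = \int_0^1 \nabla f(\gamma(t)) \cdot (\boldsymbol{y} - \boldsymbol{x}) \, \d t.
\end{align}
Taking absolute values and applying the Cauchy--Schwarz inequality inside the integral gives $|\nabla f(\gamma(t)) \cdot (\boldsymbol{y} - \boldsymbol{x})| \leq \Vert \nabla f(\gamma(t)) \Vert \, \Vert \boldsymbol{y} - \boldsymbol{x} \Vert$. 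The key step that produces the claimed constant is the conversion of the Euclidean norm of the gradient into the stated quantity: for every point $\boldsymbol{z} \in [a,b]^M$,
\begin{align}
\Vert \nabla f(\boldsymbol{z}) \Vert = \sqrt{\sum_{j=1}^M \left( \frac{\partial f(\boldsymbol{z})}{\partial x_j} \right)^2} \leq \sqrt{M} \, \max_{j \in \{1,\dots,M\}} \sup_{\boldsymbol{x} \in [a,b]^M} \left| \frac{\partial f(\boldsymbol{x})}{\partial x_j} \right| = L.
\end{align}
Since this bound is uniform in $t$, I can pull $L$ out of the integral and use $\int_0^1 \d t = 1$, leaving $|f(\boldsymbol{y}) - f(\boldsymbol{x})| \leq L \, \Vert \boldsymbol{y} - \boldsymbol{x} \Vert$, which is precisely $L$-Lipschitz continuity with the asserted constant.

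The argument is essentially routine, so the only points requiring care are exactly the two places where the hypotheses are used. The assumption that \emph{all partial derivatives are continuous} is what upgrades the mere existence of partials to genuine differentiability along $\gamma$, so that the chain rule and the fundamental theorem of calculus are valid; without it the integral representation could fail. The restriction to the box $[a,b]^M$ (with $a \leq b$ finite) guarantees both the convexity used for the segment to remain in the domain and the finiteness of $L$, since a continuous function on the compact set $[a,b]^M$ attains a finite maximum. I would flag these as the substantive ingredients and treat the Cauchy--Schwarz and $\sqrt{M}$ norm-comparison steps as the elementary calculations they are.
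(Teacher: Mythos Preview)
Your argument is correct and is the standard way to establish this fact: reduce to a line integral via the convexity of the box, apply the fundamental theorem of calculus and the chain rule (licensed by the $C^1$ hypothesis), then bound the gradient via Cauchy--Schwarz and the $\ell^2$--$\ell^\infty$ comparison $\|\nabla f\|_2 \le \sqrt{M}\,\|\nabla f\|_\infty$. There is nothing to compare against, because the paper does not actually prove this lemma: it is simply quoted from \cite{sweke2020stochastic} and used as a black box in the subsequent smoothness analysis.
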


Next we prove a more general result about the smoothness of measurement probability of quantum circuits, and then show that $\mathcal{L}(\boldsymbol\theta)$ fits into this result.

\begin{lemma}\label{lem:lip_smooth}
Consider a quantum circuit consisting of any number of fixed unitary gates and $M$ variable unitary gates $U_1(\theta_1),\dots,U_M(\theta_M)$, where $U_j(\theta_j) := \exp(i\theta_j H_j)$ for some Hermitian operator $H_j$. Then the probability of any measurement outcome of the output of this circuit is $L_1$-smooth and $L_2$-second-order smooth with respect to $\boldsymbol{\theta}=(\theta_1,\dots,\theta_M)$, where
\begin{align}
	&L_1 = 4MH_{\max}^2\\
	&L_2 = 8M^{3/2}H_{\max}^3
\end{align}
where $H_{\max}:=\max_{j} \|H_j\|_2$ and $\|\cdot\|_2$ is the induced operator norm defined as $\|A\|_2 := \sup\frac{\|Ax\|_2}{\|x\|_2}$.
\end{lemma}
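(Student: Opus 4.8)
The plan is to reduce everything to a bound on the derivatives of a single scalar function --- the measurement probability --- and then invoke \lem{lip} (applied to that function and to its first-order partials) to convert derivative bounds into Lipschitz constants, hence into smoothness constants. Write the output probability as $p(\boldsymbol\theta) = \Tr[\Pi\, \map{C}(\boldsymbol\theta)(\rho)]$, where $\rho$ is the (fixed) input state, $\Pi$ is the measurement effect, and $\map{C}(\boldsymbol\theta)$ is the channel realized by the circuit. Because every gate is either fixed or of the form $U_j(\theta_j)=\exp(i\theta_j H_j)$, the map $\boldsymbol\theta \mapsto p(\boldsymbol\theta)$ is an analytic function obtained by composing the fixed gates with the one-parameter groups $U_j(\theta_j)$, so all partial derivatives of all orders exist and are continuous; this is what lets us use \lem{lip}.

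First I would isolate the dependence on a single variable $\theta_j$. Fixing all other parameters, the Heisenberg-picture insertion of $U_j(\theta_j)$ shows that $p$ depends on $\theta_j$ only through conjugation by $U_j(\theta_j)$ of some fixed operator $A_j$ (which absorbs $\Pi$, $\rho$, and all the other gates), i.e. $p(\theta_j) = \Tr\!\big[ B_j\, U_j(\theta_j)\, A_j\, U_j(\theta_j)^\dagger\big]$ for suitable fixed operators $A_j, B_j$ with $\|A_j\|_1 \le 1$ and $\|B_j\|_\infty \le 1$ (trace class / operator norms, since $A_j$ comes from a state and $B_j$ from an effect). Differentiating, $\partial_{\theta_j} \big(U_j A_j U_j^\dagger\big) = i\,[H_j,\, U_j A_j U_j^\dagger]$, and each commutator with $H_j$ costs a factor $2\|H_j\|_2 \le 2H_{\max}$ in trace norm, while leaving the operator norm of $B_j$ unchanged; hence $|\partial_{\theta_j} p| \le 2 H_{\max}$ and, iterating the same commutator estimate, $|\partial^2_{\theta_j} p| \le 4 H_{\max}^2$. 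A mixed second derivative $\partial_{\theta_j}\partial_{\theta_k} p$ with $j\ne k$ is handled the same way, applying one commutator estimate at the $j$-th gate and one at the $k$-th gate, again giving the bound $4H_{\max}^2$. Thus every first partial of $p$ is bounded by $2H_{\max}$ and every entry of the Hessian of $p$ is bounded by $4H_{\max}^2$.

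Now I assemble the constants. Applying \lem{lip} to $p$ itself with $M$ variables gives that $p$ is $\sqrt{M}\cdot 2H_{\max}$-Lipschitz, but more to the point I apply \lem{lip} to each first partial $\partial_{\theta_\ell} p$: its partials are the Hessian entries, bounded by $4H_{\max}^2$, so $\partial_{\theta_\ell} p$ is $\sqrt M\cdot 4 H_{\max}^2$-Lipschitz. Since the gradient $\nabla p$ has components that are each $4\sqrt M H_{\max}^2$-Lipschitz, $\nabla p$ is Lipschitz with constant at most $\sqrt{M}\cdot 4\sqrt{M} H_{\max}^2 = 4 M H_{\max}^2$ (bounding the Euclidean norm of the vector of coordinatewise differences), which is exactly first-order smoothness with $L_1 = 4MH_{\max}^2$. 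For $L_2$ one repeats this one level up: the third partials of $p$ are bounded by $8 H_{\max}^3$ by the same commutator argument, so each second partial of $p$ is $8\sqrt M H_{\max}^3$-Lipschitz, and organizing the Hessian as an $M^2$-indexed object (or bounding its operator norm via the Frobenius norm of the matrix of its coordinatewise Lipschitz constants, which has $M^2$ entries each at most $8\sqrt M H_{\max}^3$, yielding a factor $\sqrt{M^2}=M$) gives $L_2 = M\cdot 8\sqrt M H_{\max}^3 = 8 M^{3/2} H_{\max}^3$.

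The main obstacle is bookkeeping rather than conceptual: one has to be careful that the operator $A_j$ extracted by "moving everything else out of the way" at the $j$-th gate really does have trace norm $\le 1$ (it is a partial-trace / composition of CPTP maps applied to $\rho$, so it does), that the commutator bound $\|[H,T]\|_1 \le 2\|H\|_2\|T\|_1$ is applied with the correct pair of norms at each differentiation, and that the passage from coordinatewise Lipschitz constants of the gradient (resp. Hessian) to a Lipschitz constant for the gradient (resp. Hessian) in operator norm contributes only the claimed $\sqrt M$ (resp. $M$) factor and not more. Finally I would note that $\mathcal{L}(\boldsymbol\theta) = 1 - \Tr(CC')$ is precisely of this form: the Choi overlap $\Tr(CC')$ equals the acceptance probability of a fixed measurement (the projector onto $C = |U\rangle\!\rangle\langle\!\langle U|/2^n$ viewed as an effect, which has operator norm $\le 1$) on the output of the decoding circuit with a fixed input (half of a maximally entangled state), so $\mathcal{L}$ inherits the $L_1$ and $L_2$ bounds above with $M = \dim\boldsymbol\theta$ and $H_{\max}$ the largest generator norm among the elementary gates $R_X, R_Y, R_Z, XX, YY, ZZ$ (a constant), which is what feeds into \lem{converge} to yield \thm{theo1}.
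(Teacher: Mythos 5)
Your proposal is correct, and its overall skeleton is the same as the paper's: bound every second-order partial of the probability by $4H_{\max}^2$ and every third-order partial by $8H_{\max}^3$, then apply \lem{lip} first to the components of the gradient (picking up one factor of $\sqrt M$ from the lemma and one from passing to the Euclidean norm of the $M$-vector of partials, giving $L_1=4MH_{\max}^2$) and then to the Hessian entries (picking up $\sqrt M$ from the lemma and $M$ from the $M\times M$ array, giving $L_2=8M^{3/2}H_{\max}^3$). Where you differ is in how the entrywise derivative bounds are obtained. The paper works in the Schr\"odinger picture on the pure output state $|\psi_{\boldsymbol\theta}\rangle = V_MU_M(\theta_M)\cdots V_1U_1(\theta_1)V_0|\psi_0\rangle$, uses $\partial_{\theta_j}U_j = iH_jU_j$ and unitary invariance to get $\|\partial^k|\psi_{\boldsymbol\theta}\rangle\|_2\le H_{\max}^k$, and then expands $\partial^2 f$ and $\partial^3 f$ by the product rule (the coefficients $2+2=4$ and $2+6=8$ come from counting the terms). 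You instead work in the Heisenberg/channel picture, writing $p=\Tr[B_j\,U_jA_jU_j^\dagger]$ and charging each differentiation a commutator factor $\|[H_j,\cdot]\|_1\le 2\|H_j\|_2\|\cdot\|_1$; the factors $2^k H_{\max}^k$ then appear directly rather than from term counting. The two computations give identical constants. Your version is marginally more general (it tolerates mixed inputs and non-unitary fixed elements, since it only needs trace-norm contractivity of the intermediate maps), at the cost of the bookkeeping you flag yourself: for mixed partials $j\ne k$ one must note that the CPTP maps sitting between the two insertion points are trace-norm contractions so that the two factors of $2H_{\max}$ simply multiply, and that the operator $B_j$ pulled back through the tail of the circuit retains $\|B_j\|_\infty\le1$ because that pullback is unital and positive. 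Both points are true, so the argument goes through; stating them explicitly would make the proof complete.
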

\begin{proof}
Let $|\psi_0\rangle$ be the initial state of the circuit. Without loss of generality, we assume the variable unitary gates are labeled as $U_1(\theta_1)$ to $U_M(\theta_M)$ according to the order they are applied. Then the output of this circuit can be written as $|\psi_{\boldsymbol{\theta}}\rangle := V_M U_M(\theta_M) \dots V_1 U_1(\theta_1) V_0 |\psi_0\rangle$, where $V_1,\dots,V_M$ denotes the fixed unitary gates. A measurement outcome can be described by a positive operator-valued measure (POVM) element $P$ with the property that $P$ and $I-P$ are both positive semidefinite, where $I$ is the identity operator of the output. The probability of this outcome is
\begin{align}
	f(\boldsymbol{\theta}) := \Tr[P |\psi_{\boldsymbol{\theta}}\rangle \langle\psi_{\boldsymbol{\theta}}|] = \langle\psi_{\boldsymbol{\theta}}| P |\psi_{\boldsymbol{\theta}}\rangle \,.
\end{align}

To study the smoothness of $f(\boldsymbol{\theta})$, we take the partial derivatives:
\begin{align*}
	\frac{\partial f(\boldsymbol{\theta})}{\partial \theta_j} & = \frac{\partial\langle\psi_{\boldsymbol{\theta}}|}{\partial \theta_j} P |\psi_{\boldsymbol{\theta}}\rangle + \langle\psi_{\boldsymbol{\theta}}| P \frac{\partial|\psi_{\boldsymbol{\theta}}\rangle}{\partial \theta_j} = 2 \Re \left( \frac{\partial\langle\psi_{\boldsymbol{\theta}}|}{\partial \theta_j} P |\psi_{\boldsymbol{\theta}}\rangle\right) \\
	\frac{\partial^2 f(\boldsymbol{\theta})}{\partial \theta_j \partial\theta_k} & = 2 \Re \left( \frac{\partial^2\langle\psi_{\boldsymbol{\theta}}|}{\partial \theta_j \partial\theta_k} P |\psi_{\boldsymbol{\theta}}\rangle + \frac{\partial\langle\psi_{\boldsymbol{\theta}}|}{\partial \theta_j} P \frac{\partial|\psi_{\boldsymbol{\theta}}\rangle}{\partial\theta_k} \right)\\
	\frac{\partial^3 f(\boldsymbol{\theta})}{\partial \theta_j \partial\theta_k \partial\theta_l} & = 2 \Re \left( \frac{\partial^3\langle\psi_{\boldsymbol{\theta}}|}{\partial \theta_j \partial\theta_k \partial\theta_l} P |\psi_{\boldsymbol{\theta}}\rangle +\frac{\partial^2\langle\psi_{\boldsymbol{\theta}}|}{\partial \theta_j \partial \theta_k} P \frac{\partial|\psi_{\boldsymbol{\theta}}\rangle}{\partial\theta_l}+ \frac{\partial^2 \langle\psi_{\boldsymbol{\theta}}|}{\partial \theta_j \partial \theta_l} P \frac{\partial|\psi_{\boldsymbol{\theta}}\rangle}{\partial\theta_k} + \frac{\partial \langle\psi_{\boldsymbol{\theta}}|}{\partial \theta_j} P \frac{\partial^2|\psi_{\boldsymbol{\theta}}\rangle}{\partial\theta_k \partial \theta_l} \right)
\end{align*}

These partial derivatives can be bounded with vector and operator norms. For a unitary operator $U$, $\|U\|_2=1$. For the POVM element $P$, $\|P\|_2\leq 1$. The second-order partial derivatives can be bounded as:
\begin{align}
	\left|\frac{\partial^2 f(\boldsymbol{\theta})}{\partial \theta_j \partial\theta_k}\right| &
		\leq 2 \left| \frac{\partial^2\langle\psi_{\boldsymbol{\theta}}|}{\partial \theta_j \partial\theta_k} P |\psi_{\boldsymbol{\theta}}\rangle
			+ \frac{\partial\langle\psi_{\boldsymbol{\theta}}|}{\partial \theta_j} P \frac{\partial|\psi_{\boldsymbol{\theta}}\rangle}{\partial\theta_k} \right| \nonumber \\
		& \leq 2\left\|\frac{\partial^2\langle\psi_{\boldsymbol{\theta}}|}{\partial \theta_j \partial\theta_k}\right\|_2 \|P\|_2 \||\psi_{\boldsymbol{\theta}}\rangle\|_2
			+ 2\left\|\frac{\partial\langle\psi_{\boldsymbol{\theta}}|}{\partial \theta_j}\right\|_2 \|P\|_2 \left\|\frac{\partial|\psi_{\boldsymbol{\theta}}\rangle}{\partial\theta_k}\right\|_2 \nonumber \\
		& \leq 2\max_{a,b}\left\|\frac{\partial^2|\psi_{\boldsymbol{\theta}}\rangle}{\partial \theta_a \partial\theta_b}\right\|_2  + 2 \max_a \left\|\frac{\partial|\psi_{\boldsymbol{\theta}}\rangle}{\partial \theta_a}\right\|_2^2 \label{eq:partial2}
\end{align}
Similarly, for the third-order partial derivatives,
\begin{align}
	\left|\frac{\partial^3 f(\boldsymbol{\theta})}{\partial \theta_j \partial\theta_k \partial\theta_l} \right |
	& \leq 2 \left| \frac{\partial^3\langle\psi_{\boldsymbol{\theta}}|}{\partial \theta_j \partial\theta_k \partial\theta_l} P |\psi_{\boldsymbol{\theta}}\rangle + \frac{\partial^2\langle\psi_{\boldsymbol{\theta}}|}{\partial \theta_j \partial \theta_k} P \frac{\partial|\psi_{\boldsymbol{\theta}}\rangle}{\partial\theta_l}+ \frac{\partial^2 \langle\psi_{\boldsymbol{\theta}}|}{\partial \theta_j \partial \theta_l} P \frac{\partial|\psi_{\boldsymbol{\theta}}\rangle}{\partial\theta_k} + \frac{\partial \langle\psi_{\boldsymbol{\theta}}|}{\partial \theta_j} P \frac{\partial^2|\psi_{\boldsymbol{\theta}}\rangle}{\partial\theta_k \partial \theta_l} \right| \nonumber \\
	& \leq 2\max_{a,b,c}\left\|\frac{\partial^3 | \psi_{\boldsymbol{\theta}}\rangle }{\partial \theta_a \partial\theta_b \partial\theta_c}\right\|_2 + 6 \left(\max_{a,b}\left\|\frac{\partial^2 |\psi_{\boldsymbol{\theta}}\rangle}{\partial \theta_a \partial\theta_b}\right\|_2 \right)\left( \max_{a}\left\|\frac{\partial|\psi_{\boldsymbol{\theta}}\rangle}{\partial \theta_a }\right\|_2 \right) \label{eq:partial3}
\end{align}

Therefore, to show that the second- and third-order partial derivatives are bounded, it suffices to show that $\max_{a}\left\|\frac{\partial|\psi_{\boldsymbol{\theta}}\rangle}{\partial \theta_a }\right\|_2, \max_{a,b}\left\|\frac{\partial^2 |\psi_{\boldsymbol{\theta}}\rangle}{\partial \theta_a \partial\theta_b}\right\|_2$ and $\max_{a,b,c}\left\|\frac{\partial^3 | \psi_{\boldsymbol{\theta}}\rangle }{\partial \theta_a \partial\theta_b \partial\theta_c}\right\|_2$ are all bounded.
We observe that $\frac{\partial U_j(\theta_j)}{\partial j} = \frac{\partial \exp(i\theta_j H_j)}{\partial j} = iH_j U_j(\theta_j)$, and we have

\begin{align}
\max_{a}\left\|\frac{\partial|\psi_{\boldsymbol{\theta}}\rangle}{\partial \theta_a }\right\|_2 & = \max_{a}\left\| V_M U_M(\theta_M) \dots V_a i  H_a U_a(\theta_a) \dots U_1(\theta_1)V_0 |\psi_0\rangle \right\|_2 \nonumber\\
	& \leq \max_{a} \left\| V_M U_M(\theta_M) \dots V_a \right\|_2 \|H_a\|_2 \left\| U_a(\theta_a) \dots U_1(\theta_1)V_0 |\psi_0\rangle \right\|_2 \nonumber \\
	& = \max_{a} \|H_a\|_2  = H_{\max}
\end{align}
With similar derivations, we can obtain $\max_{a,b}\left\|\frac{\partial^2 |\psi_{\boldsymbol{\theta}}\rangle}{\partial \theta_a \partial\theta_b}\right\|_2^2 \leq H_{\max}^2$ and  $\max_{a,b,c}\left\|\frac{\partial^3 | \psi_{\boldsymbol{\theta}}\rangle }{\partial \theta_a \partial\theta_b \partial\theta_c}\right\|_2 \leq H_{\max}^3$. According to Eqs. (\ref{eq:partial2}) and (\ref{eq:partial3}), we obtain
\begin{align}
\left|\frac{\partial^2 f(\boldsymbol{\theta})}{\partial \theta_j \partial\theta_k}\right| & \leq 4 H_{\max}^2 \\
\left|\frac{\partial^3 f(\boldsymbol{\theta})}{\partial \theta_j \partial\theta_k \partial\theta_l} \right | & \leq 8 H_{\max}^3
\end{align}

Consider $\frac{\partial f(\boldsymbol{\theta})}{\partial \theta_j}$ as a function of $\boldsymbol{\theta}$. Since its partial derivatives $\frac{\partial^2 f(\boldsymbol{\theta})}{\partial \theta_j \partial\theta_k}$ are bounded by $4 H_{\max}^2$, according to \lem{lip}, $\frac{\partial f(\boldsymbol{\theta})}{\partial \theta_j}$ is $L_1'$-continuous with $L_1'=4 \sqrt{M} H_{\max}^2$. Then
\begin{align}
	\| \nabla f(\boldsymbol\alpha) - \nabla f(\boldsymbol\beta) \|_2 &\leq \sqrt{M}\max_j \left|\frac{\partial f(\boldsymbol{\alpha})}{\partial \theta_j} - \frac{\partial f(\boldsymbol{\beta})}{\partial \theta_j} \right| \nonumber\\&\leq \sqrt{M} L_1' \|\alpha - \beta\|_2 = 4 M H_{\max}^2 \|\alpha - \beta\|_2
\end{align}
and therefore $f$ is $L_1$-smooth with $L_1 = 4 M H_{\max}^2$. Similarly, consider $\frac{\partial^2 f(\boldsymbol{\theta})}{\partial \theta_j \partial\theta_k}$ as a function of $\boldsymbol{\theta}$. Since its partial derivatives $\frac{\partial^3 f(\boldsymbol{\theta})}{\partial \theta_j \partial\theta_k \partial\theta_l}$ are bounded by $8 H_{\max}^3$, according to \lem{lip}, $\frac{\partial^2 f(\boldsymbol{\theta})}{\partial \theta_j \partial\theta_k}$ is $L_2'$-continuous with $L_2'=8\sqrt{M} H_{\max}^3$. Then
\begin{align}
	\| \nabla^2 f(\boldsymbol\alpha) - \nabla^2 f(\boldsymbol\beta) \|_2 &\leq M \max_{j,k} \left|\frac{\partial^2 f(\boldsymbol{\alpha})}{\partial \theta_j \partial\theta_k} - \frac{\partial^2 f(\boldsymbol{\beta})}{\partial \theta_j \partial\theta_k} \right| \nonumber \\ &\leq M L_2' \|\alpha - \beta\|_2 = 8 M^{3/2} H_{\max}^3 \|\alpha - \beta\|_2
\end{align}
and therefore $f$ is $L_2$-second-order smooth with $L_2 = 8 M^{3/2} H_{\max}^3$.
\end{proof}


Based on above lemma, we obtain the following theorem:
\begin{theorem}[Smoothness]\label{thm:theo5}
The loss function $\mathcal{L}(\boldsymbol\theta)$ in the training of QAEGate model is $L_1$-smooth and $L_2$-second-order smooth.
\end{theorem}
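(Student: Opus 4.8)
The plan is to reduce the statement to \lem{lip_smooth} by exhibiting $\mathcal{L}(\boldsymbol\theta)$ as (one minus) a measurement probability of a parameterized quantum circuit whose variable gates are all generated by Pauli-type Hamiltonians. First I would observe that $\mathcal{L}(\boldsymbol\theta) = 1 - f(\boldsymbol\theta)$ with $f(\boldsymbol\theta) = \Tr(C C')$, so it suffices to establish the two smoothness bounds for $f$: an additive constant and an overall sign change do not affect the Lipschitz constants of $\nabla f$ or $\nabla^2 f$. In a single SGD step $f = f_i$ depends on which training gate $U^{(i)}$ is inserted, but the constants I will obtain are uniform in $i$, and the full training loss is the average of the $f_i$'s, so $L_1$- and $L_2$-second-order smoothness pass to the average with the same constants.

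The core step is to recast $f(\boldsymbol\theta)$ into the form $\Tr[P\,|\psi_{\boldsymbol\theta}\rangle\langle\psi_{\boldsymbol\theta}|]$ required by \lem{lip_smooth}. The Choi state $C'$ of the decoded channel is produced by feeding half of a maximally entangled state into that channel, where the channel is the composition of encoder, the fixed gate $U^{(i)}$, and decoder, with $n-a$ encoder input qubits prepared in $|0\rangle$ and $n-a$ encoder output qubits discarded. I would defer the discard (retaining those qubits as an environment) and absorb all state preparations — the maximally entangled pair and the $|0\rangle$ ancillas — into a single fixed initial state $|\psi_0\rangle$, so that $C'$ becomes the partial trace over the environment of $\mathcal{V}_{\boldsymbol\theta}\,|\psi_0\rangle\langle\psi_0|\,\mathcal{V}_{\boldsymbol\theta}^\dagger$, where $\mathcal{V}_{\boldsymbol\theta}$ is the ordered product of the fixed gates (including $U^{(i)}$ and the entangling gates used to prepare the maximally entangled state) and the variable gates $U^{\boldsymbol{\theta_{\rm le}}}, U^{\boldsymbol{\theta_{\rm re}}}, U^{\boldsymbol{\theta_{\rm ld}}}, U^{\boldsymbol{\theta_{\rm rd}}}$. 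For a unitary $U^{(i)}$ the Choi state is the rank-one projector $C = \kket{U^{(i)}}\bbra{U^{(i)}}$, so $P := C \otimes I_{\mathrm{env}}$ is a valid POVM element with $\|P\|_2 \le 1$, and $f(\boldsymbol\theta) = \Tr[P\,\mathcal{V}_{\boldsymbol\theta}\,|\psi_0\rangle\langle\psi_0|\,\mathcal{V}_{\boldsymbol\theta}^\dagger]$, which is exactly a measurement probability of the circuit $\mathcal{V}_{\boldsymbol\theta}|\psi_0\rangle$.

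Next I would identify the variable gates and their generators. By the decompositions in Figures~\ref{fig:unitary} and~\ref{fig:2qubitunitary}, each of the four parameterized unitaries is built entirely from single-parameter elementary gates $R_X(\theta)=\exp(i\theta\sigma_x)$, $R_Y$, $R_Z$, $XX(\theta)=\exp(i\theta\,\sigma_x\otimes\sigma_x)$, $YY$, $ZZ$, each contributing exactly one component of $\boldsymbol\theta$; hence the number of variable gates is $M = \dim\boldsymbol\theta$. Every such gate is of the form $\exp(i\theta_j H_j)$ with $H_j$ a Pauli operator or a tensor product of two Pauli operators, embedded into the full Hilbert space by tensoring with the identity, so $\|H_j\|_2 = 1$ and $H_{\max} = 1$. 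Applying \lem{lip_smooth} then yields that $f$, and therefore $\mathcal{L}$, is $L_1$-smooth with $L_1 = 4 M H_{\max}^2 = 4\dim\boldsymbol\theta$ and $L_2$-second-order smooth with $L_2 = 8 M^{3/2} H_{\max}^3 = 8(\dim\boldsymbol\theta)^{3/2}$, consistent with the constant appearing in \thm{theo1}.

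The main obstacle is structural rather than analytical: carefully verifying that the loss genuinely matches the template of \lem{lip_smooth}. The points needing care are (i) replacing the discarded encoder output qubits by a retained environment so that the whole QAEGate becomes a single unitary acting on a fixed input state, (ii) writing the Choi-state construction as a fixed entangled-state preparation followed by that unitary so that $f$ is literally a measurement probability, and (iii) checking that embedding the one- and two-qubit generators into the full $n$-qubit (plus ancilla and environment) space does not change their operator norm, keeping $H_{\max}=1$. Once these reductions are in place, the quantitative bounds are an immediate specialization of \lem{lip_smooth}.
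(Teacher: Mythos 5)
Your proof is correct, and it reaches the same destination as the paper (reduction to \lem{lip_smooth} with $M=\dim\boldsymbol\theta$ and $H_{\max}=1$), but the key reduction step is genuinely different. The paper realizes the overlap operationally via a SWAP test: it introduces the probability $p(\boldsymbol\theta)$ of the $|+\rangle$ outcome and uses $f(\boldsymbol\theta)=2p(\boldsymbol\theta)-1$, then applies \lem{lip_smooth} to $p$. You instead observe that since each training gate $U^{(i)}$ is unitary, its Choi state $C$ is a rank-one projector, so $P:=C\otimes I_{\mathrm{env}}$ is itself a legitimate POVM element and $f(\boldsymbol\theta)=\Tr[P\,\mathcal{V}_{\boldsymbol\theta}|\psi_0\rangle\langle\psi_0|\mathcal{V}_{\boldsymbol\theta}^\dagger]$ is \emph{directly} a measurement probability of the dilated (environment-retaining) circuit, with no ancilla, no controlled-SWAPs, and no affine rescaling. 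Your route buys two things: it is cleaner, and it avoids a factor of $2$ that the paper's argument actually drops --- if $f=2p-1$ and $p$ is $4MH_{\max}^2$-smooth, then $f$ is only guaranteed $8MH_{\max}^2$-smooth, whereas your direct construction gives $L_1=4M$ for $f$ itself, matching the constant quoted in \thm{theo1}. What the paper's route buys is that the SWAP-test circuit is the one actually used to estimate $f$ on hardware, so its statement of smoothness applies to the physically measured quantity. Your additional remarks --- deferring the partial trace to an environment, absorbing state preparation into $|\psi_0\rangle$, noting that tensoring the Pauli generators with identities preserves $\|H_j\|_2=1$, and that uniform per-sample constants pass to the averaged loss --- are all correct and are in fact more careful than the paper on these points.
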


\begin{proof}
Let $p(\boldsymbol\theta)$ be the probability of outcome $|+\rangle$ in the SWAP test when the gates are parameterized with $\boldsymbol{\theta=(\theta_{\rm le},\theta_{\rm re},\theta_{\rm ld},\theta_{\rm rd})}$.
Because of $\mathcal{L}(\boldsymbol \theta) = 1-f(\boldsymbol \theta)$ and $f(\boldsymbol \theta) = 2p(\boldsymbol\theta)-1$, we just need to discuss the smoothness of $p(\boldsymbol\theta)$. Since $p(\boldsymbol\theta)$ is the probability of a measurement outcome of a unitary circuit, and in this circuit, every variable gate has the form of $U_\theta = \exp(i\theta H)$, we can apply \lem{lip_smooth}. Thus, $p(\boldsymbol\theta)$ is $L_1$-smooth and $L_2$-second-order smooth where $L_1 = 4MH_{\max}^2, L_2 = 8M^{3/2}H_{\max}^3$ and $M$ is the number of variable gates in the circuit. To prove \thm{theo5}, we only need to show that $H_{\max}$ is bounded.

The variable gates in our circuit only includes the following single- and two-qubit gates:
\begin{align}
	&R_X(\theta) := \exp(i\theta \sigma_x),~R_Y(\theta) := \exp(i\theta \sigma_y),~R_Z(\theta) := \exp(i\theta \sigma_z) \\
	&XX(\theta):=\exp(i\theta \sigma_x\otimes \sigma_x),~YY(\theta):=\exp(i\theta \sigma_y\otimes \sigma_y),~ZZ(\theta):=\exp(i\theta \sigma_z\otimes \sigma_z)
\end{align}
Therefore, $H_{\max} = \max\{\|\sigma_x\|_2,\|\sigma_y\|_2,\|\sigma_z\|_2,\|\sigma_x\otimes \sigma_x\|_2,\|\sigma_y\otimes \sigma_y\|_2,\|\sigma_z\otimes \sigma_z\|_2\} = 1$.
\end{proof}
\begin{proof}
Based on \thm{theo5}, we have proved that $\mathcal{L}(\boldsymbol\theta)$ is $L_1$-smooth, where $L_1 = 4M$. In our implementation, there is only one variable for each variable unitary gate, so $M=\dim{\boldsymbol \theta}$ here. Then we can apply \lem{converge} and obtain that the convergence rate is $T=\mathcal{O}(\frac{4\dim{\boldsymbol \theta}}{\epsilon^4})$ for finding $\mathbb{E}[\Vert \nabla_{\boldsymbol{\theta}} \mathcal{L}(\boldsymbol{\theta}) \Vert^2]\leq \epsilon^2$.

\end{proof}





\end{document}